\newcommand{\cov}{\operatornamewithlimits{Cov}}
\newcommand{\vect}{\operatornamewithlimits{vec}}
\newcommand{\E}{\mbox{E}}
\def\ci{\perp\!\!\!\perp}
\newtheorem{thm}{Theorem}[section] 
\newtheorem{remark}[thm]{Remark}
\title{Causal Network Inference \\ by Optimal Causation Entropy\thanks{This work was funded by ARO Grant No. 61386-EG (J.S and E.M.B), and NSF Grant No. DMS-1127914 through the Statistical and Applied Mathematical Sciences Institute (D.T.).}}
\author{Jie Sun\thanks{Department of Mathematics, Clarkson University, Potsdam, NY 13699 ({\tt sunj@clarkson.edu}).}
        \and Dane Taylor\thanks{Statistical and Applied Mathematical Sciences Institute, Research Triangle Park, NC 27709; Department of Mathematics, University of North Carolina, Chapel Hill, NC 27599.}
        \and Erik M. Bollt\thanks{Department of Mathematics, Clarkson University, Potsdam, NY 13699.}}
\begin{document}

\maketitle

\begin{abstract}
The broad abundance of time series data, which is in sharp contrast to limited knowledge of the underlying network dynamic processes that produce such observations, calls for a rigorous and efficient method of causal network inference.
Here we develop mathematical theory of causation entropy, an information-theoretic statistic designed for model-free causality inference. For stationary Markov processes, we prove that for a given node in the network, its causal parents forms the {\it minimal set of nodes that maximizes causation entropy}, a result we refer to as the {\it optimal causation entropy principle}.
Furthermore, this principle guides us to develop computational and data efficient algorithms for causal network inference based on a two-step discovery and removal algorithm for time series data for a network-couple dynamical system.
Validation in terms of analytical and numerical results for Gaussian processes on large random networks highlight that inference by our algorithm outperforms previous leading methods including conditioned Granger causality and transfer entropy.
Interestingly, our numerical results suggest that the number of samples required for accurate inference depends strongly on network characteristics such as the density of links and information diffusion rate and not necessarily on the number of nodes.
\end{abstract}

\begin{keywords} 
causal network inference, optimal causation entropy, stochastic network dynamics
\end{keywords}

\begin{AMS}
37N99, 62B10, 94A17
\end{AMS}

\pagestyle{myheadings}
\thispagestyle{plain}
\markboth{J. SUN, D. TAYLOR, AND E. M. BOLLT}{CAUSAL NETWORK INFERENCE BY OPTIMAL CAUSATION ENTROPY}

\section{Introduction}
Research of dynamic processes on large-scale complex networks has attracted considerable interest in recent years with exciting developments in a wide range of disciplines in social, scientific, engineering, and medical fields~\cite{Newman2003,NewmanBook,Traud2011}. One important line of research focuses on exploring the role of network structure in determining the dynamic properties of a system~\cite{BarratBook,Craciun2010,Dorfler2012,Dorogovtsev2008,Golubitsky2005,Pomerance2009,Stilwell2006,Watts2000} and utilizing such knowledge in controlling network dynamics~\cite{Cornelius2013,Sun2013PRL} and optimizing network performance~\cite{Chen2009,Kleinberg2000,Nishikawa2010,Ravoori2011,Taylor2011}.
In applications such as the study of neuronal connectivity or gene interactions, it is nearly impossible to directly identify the network structure without severely interfering with the underlying system whereas time series measurements of the individual node states are often more accessible~\cite{Stolovitzky2007}. From this perspective, it is crucial to reliably infer the network structure that shapes the dynamics of a system from time series data. It is essential that one accounts for directed ``cause and effect'' relationships, which often offer deeper insight than non-directed relationships (e.g., correlations)~\cite{Pearl2009,Schindlera2007,Spirtes2000}. 
In particular, causal network inference is considered a central problem in the research of social perception~\cite{Heider1944}, epidemiological factors~\cite{Rothman2005}, neural connectivity~\cite{Bassett2006,Bullmore2009}, economic impacts~\cite{Heckman2008}, and basic physical relationships of climatological events~\cite{Runge2012PRL,Runge2012PRE}. 
Evidently, understanding causality is a necessary and important precursor step towards the goal of effectively controlling and optimizing system dynamics (e.g., medical intervention of biological processes and policy design for economic growth and social development).

In a network dynamic process involving a large number of nodes, causal relationships are inherently difficult to infer. 
For example, the fact that a single node can potentially be influenced by many (if not all) others through network interactions makes it challenging to untangle the direct causal links from indirect and erroneous ones (see Fig.~\ref{newfig:1} for illustration).
Granger recognized the crucial role played by conditioning and defines a causal relationship based on two basic principles~\cite{Granger1969,Granger1988}:
\begin{enumerate}[(i)]
\item The cause should occur before the effect; 
\item The cause should contain information about the caused that is not available otherwise.
\end{enumerate}
A relationship that fulfills both requirements is unambiguously defined as causal.
In practice, although the first requirement is straightforward to examine when temporal ordering of the data is available, it is difficult to check the second as it involves the consideration of {\it all} available information (time series data from all variables).
Tradeoffs are often made, by either restricting to small-scale networks with no time delay and just a handful of variables~\cite{Guo2008,Sun2013PhysicaD}, or partially removing the second requirement therefore reducing the accuracy of network inference~\cite{Vicente2011JCN}.
Inferring large-scale networks from time series data remains to be a relatively open problem~\cite{Kuchaiev2009,Stolovitzky2007}.

\begin{figure}[top]
\centering
\includegraphics*[width=1\textwidth]{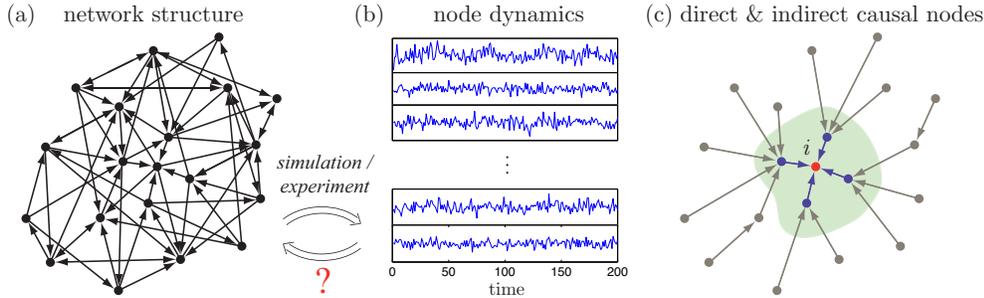}
\caption{
Network dynamics, time series, and the causal network inference problem.
Modern scientific approaches such as simulation, experiments, and data mining have produced an abundance of high-dimensional time series data describing dynamic processes on complex networks~(a$\to$b).
Given empirical observations, an important problem is to infer the causal network structure that underlies the observed time series. 
As shown in (c), for each node $i$, the goal is to identify its ``causal parents'', the nodes that directly influence its dynamics (nodes in shaded region), while pruning away the nodes that do not (nodes outside the shaded region), thus recovering the direct links to node $i$ in the causal network. 
The key to efficiently and accurately identify direct causal links from non-causal ones is to follow an algorithm involving tests for independence via judiciously selected conditioning sets. The main goal of this paper is to develop and validate such algorithms for stationary Markov processes.
}\label{newfig:1}
\end{figure}

The classical Granger causality test was designed for linear regression models~\cite{Granger1969,Granger1988}, although several extensions have been proposed to nonlinear models, including local linear approximations~\cite{Chen2004PLA} and partial functional expansion via radial basis functions~\cite{Ancona2004}.
Information-based causality inference measures represent a systematically way of overcoming the model-dependent limitation in the linear Granger causality test.
In particular, Schreiber proposed transfer entropy as a measure of information flow, or effective coupling, between two processes regardless of the actual functional relationship between them~\cite{Schreiber2000}. The transfer entropy from process $Y$ to $X$ measures the uncertainty reduction of the future states of $X$ as a result of knowing the past of $Y$ given that the past of $X$ is already known, and is essentially the mutual information between the future of $X$ and history of $Y$ {\em conditioning} on the history of $X$~\cite{Kaiser2002,Palus2001}. 
Because of its ability to associate {temporal and spatial} directionality with coupling, transfer entropy has quickly started to gain popularity in a broad range of disciplines including bioinformatics, neuroscience, climatology and others, as a tool to infer effective pairwise coupling that underlie complex dynamic processes~\cite{Bollt2012IJBC,Vicente2011JCN}. 
However, transfer entropy, which was introduced specifically for detecting the directionality of information flow between two processes, has  fundamental limitations when applied {in a multivariate setting,} to the inference of networks~\cite{Smirnof2013,Sun2013PhysicaD}. In particular, without proper conditioning, inference based on transfer entropy tends to produces systematic errors due to, for example, the effects of indirect influences and dominance of neighbors~\cite{Sun2013PhysicaD}. As shown in Fig.~\ref{newfig:1}(c), the main purpose of this work is to identify for each node $i$  its ``causal parents'' that directly influence node $i$, while not falsely inferring indirect (i.e., non-causal) nodes.

Proper conditioning can distinguish between direct and indirect causal relationships, and it is thus unsurprising that conditioning is widely adopted as a key ingredient in many network inference methods \cite{Frenzel2007,Guo2008,Kaiser2002,Palus2001,Runge2012PRL,Runge2012PRE,Smirnof2013,Spirtes2000,Sun2013PhysicaD}; however, even within such a general theme, the inference of networks requires a theoretically sound approach that is also algorithmically reliable and efficient. For example, one must develop a strategy for choosing which potential links to examine and which nodes to condition on. {Thus we note two essential steps in causal network inference: (1) adopting a statistic for the inference of a causal relationship, and (2) developing an algorithm that iteratively employs step (1) to learn the causal network. Whereas accuracy, tractability, and generality of the chosen statistic is often the priority for (1), various challenges arise regarding (2). In particular, these often include minimizing the computational cost by reducing the number of statistics that needs to be computed, as well as reducing the error incurred by finite-sized data by keeping the size of the conditioning sets (i.e., the dimension of the estimation problem) as small as possible. In general, the inaccuracy when estimating statistical measures from finite data grows rapidly with dimensionality, making the dimensionality of the problem a priority for any networks containing more than a couple nodes.}

One approach for network inference is to test each candidate causal link conditioned on all other variables~\cite{Guo2008}. That is, a direct link $j\to i$ is inferred if such a relationship remains effective when conditioning on all other variables in the system. Although intuitive and correct in theory, this method requires computing a statistic in a sample space as high dimensional as the entire system and therefore falls short when applied to a large networks.
The PC algorithm~\cite{Spirtes2000} overcomes this difficulty by repeated testing of the candidate causal link conditioned on subsets of the remaining variables~\cite{Runge2012PRL,Runge2012PRE}. To be more specific, a link $j\to i$ is disqualified as a candidate causal relationship if it is insignificant when conditioned on some subset of the nodes. The advantage of the PC algorithm is that it reduces the dimensionality of the sample space the test of independence to be proportional to the size of the conditioning set (which in some cases can be much smaller than the system size). However, unless the maximum degree of the nodes are known {\it a priori}, the algorithm in principle needs to be performed for combinations of subsets as the conditioning sets up to the size of the entire network. 
In this respect, regardless of the dimensionality of the sample space, the combinatorial search itself can be computationally infeasible for moderate to large networks. In practice, tradeoff needs to be made between an algorithm's computational cost and data efficiency (in terms of the estimation of the test statistic).

In this paper we develop theory of causation entropy---a type of conditional mutual information designed for causal network inference. In particular, we prove the optimal causation entropy principle for {Markov} processes: the set of nodes that directly cause a given node is the unique minimal set of nodes that maximizes causation entropy. This principle allows us to convert the problem of causality inference into the optimization of causation entropy. We further show that this optimization problem, which appears to be combinatorial, can in fact be solved by simple {greedy} algorithms, which are both computational efficient and data efficient. We {verify} the effectiveness of the proposed algorithms through analytical and numerical investigations of Gaussian processes on various network types including trees, loops, and random networks. Somewhat surprisingly, our results suggest that it is the density of links and information diffusion rate rather than the number of nodes in a network that determines the minimal sample size required for accurate inference.

\section{Stochastic Process and Causal Network Inference}
We begin by introducing a theoretical framework for inferring causal networks from high-dimensional time series. This framework is general in that it is applicable to both linear and non-linear systems with or without added noise.

Consider a network (graph) $\mathcal{G}=(\mathcal{V},\mathcal{E})$, with $\mathcal{V}=\{1,2,\dots,n\}$ being the set of nodes and
$\mathcal{E}\subset\mathcal{V}\times\mathcal{V}\times\mathbb{R}$ being the set of weighted links (or edges).
The {\it adjacency matrix} $A=[A_{ij}]_{n\times n}$ is defined as
\begin{equation}
{A_{ij}=
\begin{cases}
\mbox{weight of the link $j\rightarrow i$,} & \mbox{if $j\rightarrow i$ in the network};\\
0, & \mbox{otherwise}.
\end{cases}}
\end{equation}
We use $\chi_0(A)$ to denote the corresponding unweighted adjacency matrix defined entry-wise by $\chi_0(A)_{ij}=1$ iff $A_{ij}\not=0$ and $\chi_0(A)_{ij}=0$ iff $A_{ij}=0$.
We define the set of {\em causal parents} of $i$ as 
\begin{equation}\label{eq:causalparents}
	N_i=\{j|A_{ij}\neq 0\}=\{j|\chi_0(A)_{ij}= 1\}.
\end{equation}
For a subset of nodes $I\subset\mathcal{V}$, we similarly define its set of {causal parents} as
\begin{equation}\label{eq:causalparents2}
	N_I = \cup_{i\in I}N_i.
\end{equation}
We consider stochastic network dynamics in the following form (for each node $i$)
\begin{equation}\label{eq:generalnonlinear}
	X^{(i)}_{t} = f_i\big(A_{i1}X^{(1)}_{t-1},A_{i2}X^{(2)}_{t-1},\dots,A_{ij}X^{(j)}_{t-1},\dots,A_{in}X^{(n)}_{t-1},\xi^{(i)}_t\big)
\end{equation}
where $X^{(i)}_t\in\mathbb{R}^d$ is a random variable representing the state of node $i$ at time $t$,
$\xi^{(i)}_t\in\mathbb{R}^d$ is the random fluctuation on node $i$ at time $t$, and 
$f_i:\mathbb{R}^{d\times(n+1)}\rightarrow\mathbb{R}^d$ models the functional dependence of the state of node $i$ on the past states of nodes $j$ with $A_{ij}\neq 0$.
Note that other than the noise term $\xi^{(i)}_t$, the state $X^{(i)}_t$ only depends (stochastically) on the past states of its {causal parents}, $X^{(j)}_{t-1}$ ($j\in N_i$).

For a subset $K=\{k_1,k_2,\dots,k_q\}\subset\mathcal{V}$, we define 
\begin{equation}
	X^{(K)}_t\equiv[X^{(k_1)}_t,X^{(k_2)}_t,\dots,X^{(k_q)}_t]^\top.
\end{equation}
If $K=\mathcal{V}$, we simplify the notation and denote
\begin{equation}
	X_t\equiv X^{(\mathcal{V})}_t=[X^{(1)}_t,X^{(2)}_t,\dots,X^{(n)}_t]^\top.
\end{equation}

\subsection{Problem of Causal Network Inference and Challenges}
Given quantitative observations of the dynamic states of individual nodes, often in the form of time series, a central problem is to infer its (causal) system dynamics, which involves the inference of
(1) the causal network topology, $\chi_0(A)$; (2) the link weights, $\{A_{ij}\}$; and (3) the specific forms of functional dependence between nodes, $\{f_i\}$. {These problems are interrelated and all challenging. We focus on the first problem}: inferring the causal network topology $\chi_0(A)$, which serves as the skeleton of the actual network dynamics. See Fig.~\ref{newfig:1} as a schematic illustration.
In particular, the problem of causal network inference can be casted mathematically as:
\begin{equation}
\begin{cases}
\mbox{Given:}&\mbox{Samples of the node states $x^{(i)}_t$ ($i=1,2,\dots,n;~t=1,2,\dots,T$).}\\
\mbox{Goal:}&\mbox{Infer the structure of the underlying causal network,}\\
&\mbox{i.e., find}~\operatorname{argmin}_{\hat{A}}\|\chi_0(A)-\hat{A}\|_0,~
\mbox{where }\|M\|_0\equiv \sum_{i,j}|M_{ij}|^0.
\end{cases}
\end{equation}

One key challenge is that in many applications, the number of nodes $n$ is often {\it large} (usually hundreds at least),  but the sample size $T$ is much smaller than needed for reliable estimation of the $(n\times d)$-dimensional joint distribution. 
We propose that a practical causation inference method should fulfill the following three {requirements}:
\begin{enumerate}
\item {\it Model-free.} The method should not rely on assumptions about either the form or parameters of a model that underlie the process.
\item {\it Computational Efficient.} The method should be computationally efficient.
\item {\it Data Efficient.} The method should achieve high accuracy with relatively small number of samples (i.e., convergence in probability needs to be fast).
\end{enumerate}
In this paper we address the model-free requirement by utilizing information-theoretic measures, and in particular, by using causation entropy. On the other hand, our theoretical developments of the optimal causation entropy principle enables us to develop algorithms that are both computationally efficient and data efficient.

\subsection{Markov Assumptions}
We study the system in a probabilistic framework {assuming stationarity and existence of a continuous distribution}.
We {further} make the following assumptions regarding the conditional distributions $p(\cdot|\cdot)$ arising from the {stationary process given by} Eq.~\eqref{eq:generalnonlinear}. For every node $i\in\mathcal{V}$ and time indices $t,t'$:
\begin{equation}\label{eq:processconds}
\begin{cases}
\mbox{(1) {\em Temporally Markov}:~}\\
\quad\quad\quad\quad p(X_{t}|X_{t-1},X_{t-2},\dots) = p(X_{t}|X_{t-1})= p(X_{t'}|X_{t'-1}).\\
\mbox{(2) {\em Spatially Markov}:~}\\
\quad\quad\quad\quad p(X^{(i)}_{t}|X_{t-1})=p(X^{(i)}_t|X^{(N_i)}_{t-1}).\\
\mbox{(3) {\em Faithfully Markov}:~}\\
\quad\quad\quad\quad{p(X^{(i)}_t|X^{(K)}_{t-1})\neq p(X^{(i)}_t|X^{(L)}_{t-1})\mbox{~whenever~} (K\cap N_i)\neq (L\cap N_i).}
\end{cases}
\end{equation}
Throughout the paper, the relationship between two probability density functions $p_1$ and $p_2$ are denoted as $``p_1=p_2"$ iff they equal {\it almost everywhere}, and $``p_1\neq p_2"$ iff there is a set of positive measure on which the two functions do not equal.

In~Eq.~\eqref{eq:processconds}, condition (1) states that {the underlying dynamics is a time-invariant Markov process\footnote{{If the process is Markov but with higher order, our approach is to convert it into a first-order one as illustrated in Appendix~\ref{app:A} and then apply the theory and algorithms in the main body of the paper to the resulting first-order process.}}.} 
Condition (2) is often referred to as the (local) Markov property~\cite{Lauritzen1996}, which we call Spatially Markov here to differ from Temporally Markov. This condition guarantees that in determining the future state of a node, if knowledge about the past states of all its {causal parents} $N_i$ [as defined in Eq.~\eqref{eq:causalparents}] is given, information about the past of any other node becomes irrelevant. 
{Finally, condition (3) ensures that the set of causal parents is unique and that every causal parent presents an observable effect regardless of the information about  other causal parents\footnote{{Note that without condition (3), the ``true positive" statement in Theorem~\ref{thm:cse} is no longer valid. One simple example is given in Appendix~\ref{app:B} to illustrate this point.}}.}

{The conditional independence between two random variables $X$ and $Y$ given $Z$ is denoted by $(X\ci Y~|~Z )$, i.e.,}
\begin{equation}
	(X\ci Y~|~Z )~\Longleftrightarrow~p(X|Y,Z) = p(X|Z).
\end{equation}
The following results regarding conditional independence will be useful in later sections and are direct consequences of the basic axioms of probability theory~\cite{Grimmett,Lauritzen1996,Pearl2009}:
\begin{equation}\label{eq:condind}
\begin{cases}
\mbox{Symmetry:~}(X\ci Y~|~Z)~\Longleftrightarrow~(Y\ci X~|~Z).\\
\mbox{Decomposition:~} (X\ci YW~|~Z)~\Longrightarrow~(X\ci Y~|~Z).\\
\mbox{Weak union:~} (X\ci YW~|~Z)~\Longrightarrow~(X\ci Y~|~ZW).\\
\mbox{Contraction:~} (X\ci Y~|~Z)\wedge(X\ci W~|~ZY)~\Longrightarrow~(X\ci YW~|~Z).\\
\mbox{Intersection:~} (X\ci Y~|~ZW)\wedge(X\ci W~|~ZY)~\Longrightarrow~(X\ci YW~|~Z).
\end{cases}
\end{equation}
Here $``\wedge"$ denotes the logical operations ``and" (the symbol $``\vee"$ is used later for ``or"),
and $YW$ denotes a joint random variable of $Y$ and $W$.

\subsection{Causation Entropy as an Information-Theoretic Measure}
We review several fundamental concepts in information theory, leading to causation entropy, a model-free information-theoretic statistic that can be used to infer direct causal relationships~\cite{Sun2013PhysicaD}.

Originally proposed by Shannon as a measure of uncertainty and complexity, the (differential) {\em entropy} of a continuous random variable $X\in\mathbb{R}^n$ is defined as~\cite{Cover2006,Shannon1948}\footnote{We follow the convention in Ref.~\cite{Cover2006} to use $h(\cdot)$ for the entropy of a continuous random variable and reserve $H(\cdot)$ for the entropy of a discrete random variable. In the discrete case, we need to replace the integral by summation and probability density by probability mass function in the definition.}
\begin{equation}
	h(X) = -\int p(x)\log{p(x)}dx,
\end{equation}
where $p(x)$ is the probability density function of $X$.
The joint and conditional entropies between two random variables $X$ and $Y$ are defined as [also see Fig.~\ref{newfig:2}(a)]
\begin{equation}
\begin{cases}
	\mbox{Joint entropy:~}h(X,Y)\equiv h(Y,X)\equiv-\int p(x,y)\log p(x,y)dxdy.\\
	\mbox{Conditional entropies:~}
	\begin{cases}
		h(X|Y)\equiv-\int p(x,y)\log p(x|y)dxdy;\\
		h(Y|X)\equiv-\int p(x,y)\log p(y|x)dxdy.
	\end{cases}
\end{cases}
\end{equation}
For more than two random variables, the entropies are similarly defined (as above) by grouping the variables into two classes, one acting as $X$ and the other as $Y$.

The {\it{mutual information}} between two random variables $X$ and $Y$ (conditioning on $Z$) can be interpreted as a measure of the deviation from independence between $X$ and $Y$ (conditioning on $Z$). The corresponding unconditioned and conditional mutual information are defined respectively as
\begin{equation}
\begin{cases}
	\mbox{Mutual information:}~I(X;Y) \equiv h(X)-h(X|Y) \equiv h(Y)-h(Y|X).\\
	\mbox{Conditional mutual information:}~\\
	\quad\quad\quad\quad I(X;Y|Z) \equiv h(X|Z)-h(X|Y,Z) \equiv h(Y|Z)-h(Y|X,Z).
\end{cases}
\end{equation}
The mutual information among three variables $X$, $Y$, and $Z$ is defined as\footnote{{This quantity is often referred to as {\it interaction information}~\cite{McGill1954} or {\it co-information}~\cite{Bell2003}. Another multivariate generalizations of mutual information is total correlation~\cite{Watanabe1960} (also known as multivariate constraint~\cite{Garner1962} or multi-information~\cite{Studeny1998}).}}
\begin{equation}
	I(X;Y;Z) \equiv I(X;Y) - I(X;Y|Z) \equiv I(Y;Z) - I(Y;Z|X) \equiv I(X;Z) - I(X;Z|Y),
\end{equation}
The mutual information between two variables is always nonnegative, $I(X;Y)\geq 0$, with equality if and only if $X$ and $Y$ are independent. Similarly, $I(X;Y|Z)\geq 0$, with equality if and only if $X$ and $Y$ are independent when conditioned on $Z$.
Interestingly, for three or more variables, such an inequality does not hold: the mutual information $I(X;Y;Z)$ can be either positive, negative or zero~\cite{McGill1954}.
Figure~\ref{newfig:2}(a) visualizes the relationships between entropy, conditional entropy, and mutual information.

\begin{figure}[tbp]
\centering
\includegraphics*[width=1\textwidth]{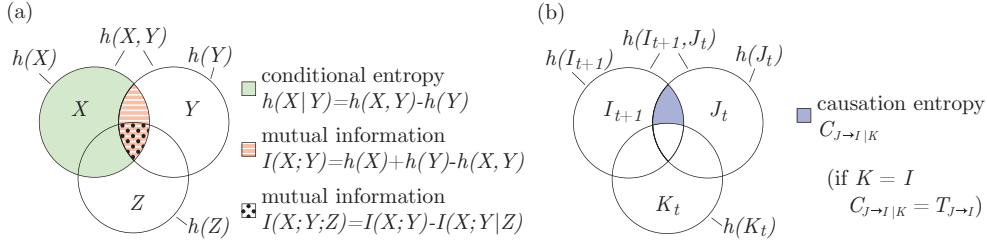}
\caption{
Venn-like diagrams for information-theoretic measures.
(a) Visualization of the relationships between entropy, conditional entropy, and mutual information.
(b) Visualization of the relationships between conditional entropy, causation entropy, and transfer entropy.
In the picture of (b), letters $I$, $J$, and $K$ are used to denote $X^{(I)}$, $X^{(J)}$, and $X^{(K)}$, respectively.
}\label{newfig:2}
\end{figure}

To measure the directionality of information flow between two random processes, Schreiber proposed a specific type of conditional mutual information
called {\it transfer entropy}~\cite{Schreiber2000}. 
{For a stationary first-order Markov process such as the one given by Eq.~\eqref{eq:generalnonlinear}, }
the transfer entropy from $j$ to $i$ can be expressed as
\begin{equation}\label{eq:tedef}
{	T_{j\rightarrow i} \equiv h(X^{(i)}_{t+1}|X^{(i)}_t) - h(X^{(i)}_{t+1}|X^{(i)}_t,X^{(j)}_t),}
\end{equation}
where $h(\cdot|\cdot)$ denotes conditional entropy~\cite{Cover2006}.
Since $h(X^{(i)}_{t+1}|X^{(i)}_t)$ measures the uncertainty of $X^{(i)}_{t+1}$ given information about $X^{(i)}_t$ and
$h(X^{(i)}_{t+1}|X^{(i)}_t,X^{(j)}_t)$ measures the uncertainty of $X^{(i)}_{t+1}$ given information about {\it both $X^{(i)}_t$ and $X^{(j)}_t$},
the transfer entropy $T_{j\rightarrow i}$ can be interpreted as the {\it reduction of uncertainty} about future states of $X^{(i)}$ when the current state of $X^{(j)}$ is provided {\it in addition to that of $X^{(i)}$}. 

Networks of practical interest inevitably contain (many) more than two nodes. As we will show later,  without appropriate conditioning transfer entropy fails to distinguish between direct and indirect causality in networks. To overcome the pairwise limitation of transfer entropy, we define {\em causation entropy}. 
The relationships between entropy, transfer entropy and causation entropy are illustrated in Fig~\ref{newfig:2}(b).

\vspace{0.05in}
\begin{definition}[Causation Entropy~\cite{Sun2013PhysicaD}]
The {\em causation entropy} from the set of nodes $J$ to the set of nodes $I$ conditioning on the set of nodes $K$ is defined as\footnote{ {Note that the definitions in Eq.~\eqref{eq:tedef} and Eq.~\eqref{eq:csedef} can be extended for asymptotically stationary processes by taking the limit of $t\rightarrow\infty$, although the proofs in this paper do not directly apply to such general scenario.}}
\begin{equation}\label{eq:csedef}
{	C_{J\rightarrow I|K} = h(X^{(I)}_{t+1}|X^{(K)}_t) - h(X^{(I)}_{t+1}|X^{(K)}_t,X^{(J)}_t),}
\end{equation}
where $I,J,K$ are all subset of $\mathcal{V}=\{1,2,\dots,n\}$.
In particular, if $J=\{j\}$ and $I=\{i\}$, we simplify the notation as $C_{j\rightarrow i|K}$.
If the conditioning set $K=\varnothing$, we often omit it and simply write $C_{J\rightarrow I}$.
\end{definition}
\vspace{0.05in}

\begin{remark}
Causation entropy is a natural generalization of transfer entropy from measuring pairwise causal relationships to network relationships of many variables. In particular, if $j\in{K}$, then the causation entropy $C_{j\rightarrow i|{K}}=0$ as $j$ does not carry extra information (compared to that of $K$).
On the other hand, if $K=\{i\}$, causation entropy recovers transfer entropy, i.e.,
\begin{equation}
	C_{j\rightarrow i|i}=T_{j\rightarrow i}. 
\end{equation}
Interestingly, in this framework we see that transfer entropy assumes that nodes are self-causal, whereas causation entropy relaxes this assumption.
Preliminary exploration of the differences between the two measures can be found in Ref.~\cite{Sun2013PhysicaD}.
\end{remark}
\vspace{0.05in}

{
\begin{remark}
We note that in addition to Ref.~\cite{Sun2013PhysicaD}, the conditional mutual information between time-lagged variables has been proposed as a statistic for network inference in a few previous studies~\cite{Frenzel2007,Runge2012PRL,Runge2012PRE,Vejmelka2008} (although not referred to as transfer or causation entropy).
\end{remark}
\vspace{0.05in} 
}

\begin{remark}
It seems plausible to conjecture that if two subsets of the nodes satisfy ${K}_1\subset{K}_2$, then $C_{j\rightarrow i|{K}_1}$ would be no less than $C_{j\rightarrow i|{K}_2}$. We remark that this statement about monotonicity is false (see the two examples below).
\vspace{0.05in}

\noindent
{\em Example 1.} Consider the stochastic process
\begin{equation}
	X^{(1)}_t = X^{(2)}_{t-1} + X^{(3)}_{t-1}
\end{equation}
where $X^{(k)}_{t}$ are i.i.d Bernoulli variables: $P(X^{(k)}_t=0)=P(X^{(k)}_t=1)=0.5~(k=2,3)$.
Let $i=1$, $j=2$, ${K}_1=\varnothing$ and ${K}_2=\{3\}$. 
It follows that
\begin{equation}
\begin{cases}
	C_{2\rightarrow 1|\varnothing}=\frac{3}{2}\log{2} - \log{2} = \frac{1}{2}\log{2}\\
	C_{2\rightarrow 1|\{3\}}=\log{2}-0=\log{2}
\end{cases}\\
\Rightarrow~C_{2\rightarrow 1|\varnothing}<C_{2\rightarrow 1|\{3\}}.
\end{equation}

\noindent
{\em Example 2.} Consider the stochastic process 
\begin{equation}
X^{(1)}_{t+1} = X^{(3)}_t,~X^{(2)}_{t+1} = X^{(3)}_t,
\end{equation}
where $X^{(3)}_{t}$ are Bernoulli variables with $P(X^{(3)}_t=0)=P(X^{(3)}_t=1)=0.5$.
Let $i=1$, $j=2$, ${K}_1=\varnothing$ and ${K}_2=\{3\}$.
It follows that
\begin{equation}
\begin{cases}
	C_{2\rightarrow 1|\varnothing}=\log{2}-0=\log{2}\\
	C_{2\rightarrow 1|\{3\}}=0-0=0
\end{cases}
\Rightarrow~C_{2\rightarrow 1|\varnothing}>C_{2\rightarrow 1|\{3\}}.
\end{equation}
\noindent
The seemingly paradoxical observation that $C_{j\rightarrow i|K_1}$ can either be larger or smaller than $C_{j\rightarrow i|K_2}$ despite the fact that ${K}_1\subset{K}_2$ can be understood as follows: When ${K}_1\subset{K}_2$, $C_{j\rightarrow i|{K}_1}-C_{j\rightarrow i|{K}_2}$ corresponds to the mutual information among the three variables $X^{(i)}_{t+1}|X^{({K}_1)}_t$, $X^{(i)}_{t+1}|X^{(j)}_t$ and $X^{(i)}_{t+1}|X^{({K}_2-{K}_1)}_t$ (see Fig.~\ref{newfig:2}). Contrary to the two-variable case where mutual information is always nonnegative, the mutual information among three (or more) variables can either be  positive, negative or zero~\cite{McGill1954}.
\end{remark}

\subsection{Theoretical Properties of Causation Entropy and the Optimal Causation Entropy Principle}
In the following we show that analysis of causation entropy leads to exact network inference for the network stochastic process given by Eq.~\eqref{eq:generalnonlinear} subject to the Markov assumptions in Eq.~\eqref{eq:processconds}. 

We start by exploring basic analytical properties of causation entropy, which is presented as Theorem~\ref{thm:cse} and also summarized in Fig.~\ref{newfig:3}.
\vspace{0.05in}
\begin{theorem}[Basic analytical properties of causation entropy]\label{thm:cse}
Suppose that the network stochastic process given by Eq.~\eqref{eq:generalnonlinear} satisfies the Markov assumptions in Eq.~\eqref{eq:processconds}.
Let $I\subset\mathcal{V}$ be a set of nodes and ${N}_I$ be {its causal parents}.
Consider two sets of nodes $J\subset\mathcal{V}$ and $K\subset\mathcal{V}$. 
The following results hold:
\begin{enumerate}[(a)]
\item (Redundancy) If $J\subset K$, then $C_{J\rightarrow I|K}=0$.
\item (No false positive) If ${N}_I\subset K$, then $C_{J\rightarrow I|K}=0$ for any set of nodes $J$.
\item (True positive) If $J\subset{N}_I$ and $J\not\subset K$, then $C_{J\rightarrow I|K}>0$.
\item (Decomposition) $C_{J\rightarrow I|K}=C_{(K\cup J)\rightarrow I}-C_{K\rightarrow I}$.
\end{enumerate}
\end{theorem}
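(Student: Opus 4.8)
The plan is to rewrite causation entropy as a conditional mutual information, $C_{J\rightarrow I|K}=I(X^{(I)}_{t+1};X^{(J)}_t|X^{(K)}_t)$, and then treat the four claims separately, noting that (a) and (d) are essentially formal while (b) and (c) carry the probabilistic content. I would dispose of (a) and (d) first. For (a): when $J\subset K$ the vector $X^{(J)}_t$ is a subvector of $X^{(K)}_t$, so $h(X^{(I)}_{t+1}|X^{(K)}_t,X^{(J)}_t)=h(X^{(I)}_{t+1}|X^{(K)}_t)$ and the difference defining $C_{J\rightarrow I|K}$ vanishes. For (d): expanding the three causation entropies with the definition, the term $h(X^{(I)}_{t+1})$ cancels between $C_{(K\cup J)\rightarrow I}$ and $C_{K\rightarrow I}$, and since conditioning on the pair $(X^{(K)}_t,X^{(J)}_t)$ is the same as conditioning on $X^{(K\cup J)}_t$ (repeated coordinates when $K\cap J\neq\varnothing$ contribute nothing), the difference collapses to $h(X^{(I)}_{t+1}|X^{(K)}_t)-h(X^{(I)}_{t+1}|X^{(K)}_t,X^{(J)}_t)=C_{J\rightarrow I|K}$.

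For (b) I would first record the set-valued form of the spatially Markov property, $p(X^{(I)}_{t+1}|X_t)=p(X^{(I)}_{t+1}|X^{(N_I)}_t)$, which follows from the structural form~\eqref{eq:generalnonlinear} together with the innovations $\xi_{t+1}$ being independent of $X_t$: the conditional law of $(X^{(i)}_{t+1})_{i\in I}$ given $X_t=x_t$ is the pushforward of the law of $(\xi^{(i)}_{t+1})_{i\in I}$ under a map that reads $x_t$ only through the coordinates indexed by $\cup_{i\in I}N_i=N_I$. Given this, if $N_I\subset K$ then both $K$ and $K\cup J$ contain $N_I$, so marginalizing out the remaining coordinates yields $p(X^{(I)}_{t+1}|X^{(K)}_t)=p(X^{(I)}_{t+1}|X^{(N_I)}_t)=p(X^{(I)}_{t+1}|X^{(K\cup J)}_t)$; the two conditional entropies in $C_{J\rightarrow I|K}$ therefore coincide and the quantity is $0$. (One may also argue via (d): $C_{M\rightarrow I}$ is nondecreasing in $M$ and saturates once $M\supset N_I$, so $C_{(K\cup J)\rightarrow I}=C_{K\rightarrow I}$.)

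For (c) the decisive step is a reduction from sets to a single source--target pair. Since $J\not\subset K$, pick $j^\ast\in J\setminus K$; since $J\subset N_I=\cup_{i\in I}N_i$, pick $i^\ast\in I$ with $j^\ast\in N_{i^\ast}$. Restricting either argument of a conditional mutual information to a subvector can only decrease it (chain rule plus nonnegativity of conditional mutual information), so
\[
C_{J\rightarrow I|K}=I(X^{(I)}_{t+1};X^{(J)}_t|X^{(K)}_t)\ \geq\ I(X^{(i^\ast)}_{t+1};X^{(j^\ast)}_t|X^{(K)}_t)=C_{j^\ast\rightarrow i^\ast|K},
\]
and it suffices to show $C_{j^\ast\rightarrow i^\ast|K}>0$, i.e. that $X^{(i^\ast)}_{t+1}$ and $X^{(j^\ast)}_t$ are not conditionally independent given $X^{(K)}_t$. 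Applying condition~(3) of~\eqref{eq:processconds} to node $i^\ast$ with conditioning sets $K$ and $L:=K\cup\{j^\ast\}$, the hypotheses $j^\ast\in N_{i^\ast}$ and $j^\ast\notin K$ give $K\cap N_{i^\ast}\neq L\cap N_{i^\ast}$, hence $p(X^{(i^\ast)}_{t+1}|X^{(K)}_t)\neq p(X^{(i^\ast)}_{t+1}|X^{(K\cup\{j^\ast\})}_t)=p(X^{(i^\ast)}_{t+1}|X^{(j^\ast)}_t,X^{(K)}_t)$, so conditional independence fails; since a conditional mutual information vanishes exactly under conditional independence, $C_{j^\ast\rightarrow i^\ast|K}>0$ and therefore $C_{J\rightarrow I|K}>0$.

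The step I expect to be the main obstacle is (c): one has to spot that monotonicity of conditional mutual information reduces the set statement to the single-node faithfulness hypothesis, condition~(3) of~\eqref{eq:processconds} --- without this reduction one is tempted to invoke a set-valued faithfulness axiom that need not hold. A secondary technical point is the set-valued local Markov property used in (b); although intuitively clear, it requires the structural form~\eqref{eq:generalnonlinear} (independence of the innovations from the past), not merely the single-node condition~(2) of~\eqref{eq:processconds}, because joint conditional independence is not implied by its coordinatewise versions.
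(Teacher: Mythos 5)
Your proposal is correct, and for parts (a), (b), and (d) it follows essentially the same route as the paper: both arguments reduce everything to the fact that $C_{J\rightarrow I|K}$ is a conditional mutual information that vanishes exactly when $p(X^{(I)}_{t+1}|X^{(K)}_t,X^{(J)}_t)=p(X^{(I)}_{t+1}|X^{(K)}_t)$ almost everywhere (the paper derives this from Jensen's/Gibbs' inequality; you take it as known), and both use the spatial Markov property for (b) and pure bookkeeping for (a) and (d). The genuine divergence is in part (c). The paper invokes faithfulness directly at the level of sets, writing $p(X^{(I)}_{t+1}|X^{(K)}_t)=p(X^{(I)}_{t+1}|X^{(K\cap N_I)}_t)\neq p(X^{(I)}_{t+1}|X^{(K)}_t,X^{(J)}_t)$, even though condition (3) of Eq.~\eqref{eq:processconds} is stated only for a single target node $i$; applying it to a set $I$ (and a set $J$) tacitly assumes a set-valued extension of the faithfulness hypothesis. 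You instead pick $j^\ast\in J\setminus K$ and $i^\ast\in I$ with $j^\ast\in N_{i^\ast}$, use the chain rule and nonnegativity of conditional mutual information to get $C_{J\rightarrow I|K}\geq C_{j^\ast\rightarrow i^\ast|K}$, and then apply the single-node faithfulness condition with $L=K\cup\{j^\ast\}$. This buys a proof that uses the hypothesis exactly as stated and closes a small gap in the paper's argument; the paper's version is shorter but leans on an unstated multivariate form of condition (3). Your observation in (b) that the joint (set-valued) local Markov property needs the structural form of Eq.~\eqref{eq:generalnonlinear} with innovations independent of the past, rather than just the coordinatewise condition (2), is likewise a point the paper passes over silently.
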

\begin{proof}
Under the Temporal Markov Condition in Eq.~\eqref{eq:processconds}, there is no time dependence of the distributions.
For notational simplicity we denote the joint distribution $p(X^{(I)}_{t+1}=i,X^{(J)}_t=j,X^{(K)}_t=k)$ by $p(i,j,k)$ and use similar notation for the marginal and conditional distributions.
 It follows that
\begin{eqnarray}\label{eq:hhdiff}
	C_{J\rightarrow I|K} &=& 
	h(X^{(I)}_{t+1}|X^{(K)}_t) - h(X^{(I)}_{t+1}|X^{(K)}_t,X^{(J)}_t)
	= -\int p(i,j,k)\log\Big[\frac{p(i|k)}{p(i|j,k)}\Big]didjdk\nonumber\\
	&\geq& - \log\int p(i,j,k)\frac{p(i|k)}{p(i|j,k)}didjdk~~\mbox{(by {\it Jensen's inequality}~{\cite{Royden}})}\nonumber\\
	&=& -\log\int p(j,k)\frac{p(i,k)}{p(k)}didjdk = -\log(1) = 0,
\end{eqnarray}
where equality holds if and only if $p(i|k)=p(i|j,k)$ {\em almost everywhere}.
The above inequality is also known as the {\em Gibbs' inequality} in statistical physics~\cite{Gibbs}.

To prove $(a)$, we note that $J\subset{K}$ implies that $p(i|k)=p(i|j,k)$ and therefore equality holds (rather than inequality) in Eq.~\eqref{eq:hhdiff}.

To prove $(b)$, it suffices to show that for $J\not\subset K$, $C_{J\rightarrow I|K}=0$.
Since $J\not\subset K$ and ${N}_I\subset K$, based on the Spatial Markov Condition in Eq.~\eqref{eq:processconds}, we have:
\begin{equation}
	p(X^{(I)}_{t+1}|X_t) = p(X^{(I)}_{t+1}|X^{(K\cup J)}_t) = p(X^{(I)}_{t+1}|X^{(K)}_t)=p(X^{(I)}_{t+1}|X^{({N}_I)}_t). 
\end{equation}
Therefore $p(i|j,k)=p(i|k)$ and equality holds in Eq.~\eqref{eq:hhdiff}.

To prove $(c)$, we use the {Faithfully Markov Condition} in Eq.~\eqref{eq:processconds}. Since $J\subset{N}_I$ and $J\not\subset K$, it follows that
\begin{equation}
	p(X^{(I)}_{t+1}|X^{(K)}_t) = p(X^{(I)}_{t+1}|X^{(K\cap N_I)}_t) \neq p(X^{(I)}_{t+1}|X^{(K)}_t,X^{(J)}_t).
\end{equation}
Thus, $p(i|j,k)\neq p(i|k)$ and strictly inequality holds in Eq.~\eqref{eq:hhdiff}. 

Finally, part $(d)$ follows directly from the definition of $C$.\qquad
\end{proof}
\vspace{0.05in}

\begin{figure}[tbp]
\centering
\includegraphics*[width=1\textwidth]{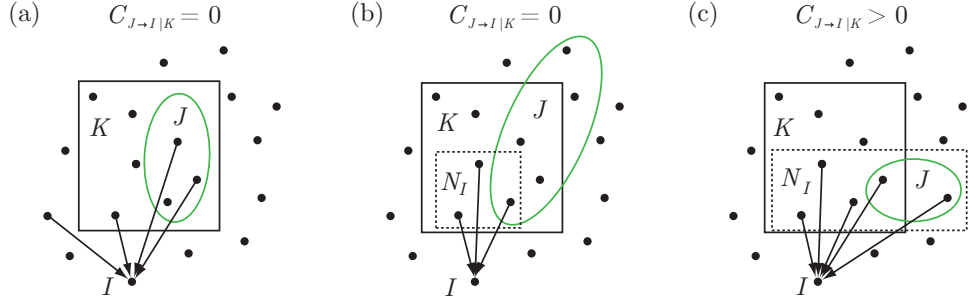}
\caption{
Basic analytical properties of causation entropy (Theorem~\ref{thm:cse}) allowing for the inference of the {causal parents} $N_I$ of a set of nodes $I$.
(a) Redundancy: If $J$ is a subset of the conditioning set $K$ ($J\subset K$), then the causation entropy $C_{J\rightarrow I|K}=0$.
(b) No false positive: If $N_I$ is already included in the conditioning set $K$ ($N_I\subset K$), then $C_{J\rightarrow I|K}=0$.
(c) True positive: If a set $J$ contains at least one {causal parent} of $I$ that does not belong to the conditioning set $K$, i.e.,
$(J\subset{N}_I)\wedge(J\not\subset K)$, then $C_{J\rightarrow I|K}>0$.
}\label{newfig:3}
\end{figure}

Theorem~\ref{thm:cse} allows us to convert the problem of causal network inference into the problem of estimating causation entropy among nodes. In particular, for a given set of nodes $I$, each node $j$ can {\it in principle} be checked independently to determine whether or not it is a {causal parent} of $I$ via either of the following two equivalent criteria (proved in Theorem~\ref{thm:cse2}(a) below)
\begin{equation}\label{eq:condcriteria}
\begin{cases}
\mbox{(1) Node $j\in N_I$ iff there is a set $K\supset N_I$, such that $C_{j\rightarrow I|(K-\{j\})}>0$};\\
\mbox{(2) Node $j\in N_I$ iff for any set $K\subset\mathcal{V}$, $C_{j\rightarrow I|(K-\{j\})}>0$.}
\end{cases}
\end{equation}
Practical application of either criteria to infer large networks is challenging.
Criterion (1) requires a conditioning set $K$ that contains $N_I$ as its subset. Since $N_I$ is generally unknown, one often must use $K=\mathcal{V}$. When the network is large ($n\gg 1$), this requires the estimation of causation entropy for very high dimensional random variables from limited data, which is inherently unreliable~\cite{Runge2012PRL,Runge2012PRE}.
Criterion (2), on the other hand, requires a combinatorial search over all subsets making it computationally infeasible. 

In the following we prove the two inference criteria in Eq.~\eqref{eq:condcriteria}. Furthermore, we show that {\it the {set of causal parents} is the minimal set of nodes that maximizes causation entropy}, which we refer to as the {\it optimal causation entropy principle}.

\vspace{0.05in}
\begin{theorem}[Optimal causation entropy principle for causal network inference]\label{thm:cse2}
Suppose that the network stochastic process given by Eq.~\eqref{eq:generalnonlinear} satisfies the Markov properties in Eq.~\eqref{eq:processconds}.
Let $I\subset\mathcal{V}$ be a given set of nodes and ${N}_I$ be the set of $I$'s causal parents, as defined in Eq.~\eqref{eq:causalparents2}.
It follows that
\begin{enumerate}[(a)]
\item (Direct inference) Node $j\in N_I$ iff $\Leftrightarrow \exists K\supset N_I$ such that $C_{j\rightarrow I|(K-\{j\})}>0\Leftrightarrow \forall K\subset\mathcal{V}, C_{j\rightarrow I|(K-\{j\})}>0$.
\item (Partial conditioning removal) If there exists $ K\subset\mathcal{V}$ such that $C_{j\rightarrow I|(K-\{j\})}=0$, then $j\notin N_I$.
\item (Optimal causation entropy principle) The set of {causal parents} is the minimal set of nodes with maximal causation entropy. \\
Define the family of sets with maximal causation entropy as 
\begin{equation}
	\mathcal{K}=\{K|\forall K'\subset\mathcal{V}, C_{K'\rightarrow I}\leq C_{K\rightarrow I}\}. 
\end{equation}
Then the set of {causal parents} satisfies
\begin{equation}
	N_I = \cap_{K\in\mathcal{K}}K = {\operatorname{argmin}}_{K\in\mathcal{K}}K.
\end{equation}
\end{enumerate}
\end{theorem}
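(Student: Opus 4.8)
The plan is to obtain all of Theorem~\ref{thm:cse2} from the four properties in Theorem~\ref{thm:cse} together with the nonnegativity $C_{J\rightarrow I|K}\geq 0$ already established in its proof (Eq.~\eqref{eq:hhdiff}); no new measure-theoretic work is needed, since the ``almost everywhere'' subtleties are absorbed into Theorem~\ref{thm:cse}. Parts (a) and (b) are short logical rearrangements of the no-false-positive and true-positive properties, while part (c) is the substantive claim, whose real content is showing that $N_I$ itself attains the maximal causation entropy.

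For part (a) I would run the cycle $j\in N_I\Rightarrow(\forall K\subset\mathcal V,\ C_{j\rightarrow I|(K-\{j\})}>0)\Rightarrow(\exists K\supset N_I,\ C_{j\rightarrow I|(K-\{j\})}>0)\Rightarrow j\in N_I$. The first implication is Theorem~\ref{thm:cse}(c) applied with $J=\{j\}$, noting $\{j\}\subset N_I$ and $\{j\}\not\subset K-\{j\}$; the second is trivial (take $K=\mathcal V$, which contains $N_I$); the third is the contrapositive of Theorem~\ref{thm:cse}(b): if $j\notin N_I$ and $K\supset N_I$, then $N_I\subset K-\{j\}$, forcing $C_{j\rightarrow I|(K-\{j\})}=0$. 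Part (b) is then immediate, being the contrapositive of ``$j\in N_I\Rightarrow\forall K,\ C_{j\rightarrow I|(K-\{j\})}>0$''.

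For part (c), the first step is to prove $N_I\in\mathcal K$. I would expand $C_{(N_I\cup K')\rightarrow I}$ in two ways via the decomposition property (Theorem~\ref{thm:cse}(d)):
\begin{equation}
C_{N_I\rightarrow I}+C_{K'\rightarrow I|N_I}=C_{(N_I\cup K')\rightarrow I}=C_{K'\rightarrow I}+C_{N_I\rightarrow I|K'}.
\end{equation}
Here $C_{K'\rightarrow I|N_I}=0$ by the no-false-positive property (the conditioning set already contains $N_I$), and $C_{N_I\rightarrow I|K'}\geq 0$ by nonnegativity, so $C_{N_I\rightarrow I}=C_{K'\rightarrow I}+C_{N_I\rightarrow I|K'}\geq C_{K'\rightarrow I}$ for every $K'\subset\mathcal V$; hence $N_I\in\mathcal K$. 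The second step is to show $N_I\subset K$ for every $K\in\mathcal K$: if some $j\in N_I$ had $j\notin K$, then Theorem~\ref{thm:cse}(c) gives $C_{j\rightarrow I|K}>0$, and decomposition gives $C_{(K\cup\{j\})\rightarrow I}=C_{K\rightarrow I}+C_{j\rightarrow I|K}>C_{K\rightarrow I}$, contradicting maximality of $C_{K\rightarrow I}$. Together these show $N_I$ belongs to $\mathcal K$ and is contained in every member of $\mathcal K$, which yields both $N_I=\cap_{K\in\mathcal K}K$ and that $N_I$ is the unique $\subseteq$-minimal element of $\mathcal K$, i.e.\ $N_I=\operatorname{argmin}_{K\in\mathcal K}K$.

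The one place to be careful is the double expansion of $C_{(N_I\cup K')\rightarrow I}$: one must pair the vanishing term with the expansion in which $N_I$ sits in the conditioning set, and keep straight which set plays the role of $J$ and which of $K$ in Theorem~\ref{thm:cse}(d). Beyond that bookkeeping, every step is a direct appeal to Theorem~\ref{thm:cse}, so I anticipate no genuine obstacle.
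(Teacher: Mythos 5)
Your proposal is correct, and parts (a) and (b) coincide with the paper's argument (same cycle of implications through Theorem~\ref{thm:cse}(b) and (c)). Where you genuinely diverge is in part (c), on the inclusion $\cap_{K\in\mathcal{K}}K\subset N_I$. The paper argues top-down by contradiction: if some $j\in\cap_{K\in\mathcal{K}}K$ were not a causal parent, one could delete $j$ from a maximal set $K$ without changing $C_{K\to I}$ (since $N_I\subset K-\{j\}$ forces $C_{j\to I|(K-\{j\})}=0$), producing another member of $\mathcal{K}$ that omits $j$; finiteness then lets one strip all non-parents, which is also how the paper implicitly justifies that the intersection is itself the $\operatorname{argmin}$. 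You instead work bottom-up, proving directly that $N_I\in\mathcal{K}$ via the two-sided decomposition $C_{N_I\to I}+C_{K'\to I|N_I}=C_{K'\to I}+C_{N_I\to I|K'}$, killing one term by the no-false-positive property and bounding the other by nonnegativity. Your identity does not appear in the paper's proof, and it buys you two things: $\mathcal{K}\neq\varnothing$ comes for free, and the $\operatorname{argmin}$ claim is immediate because $N_I$ is exhibited as a member of $\mathcal{K}$ contained in every other member, with no stripping induction needed. The paper's route, in exchange, never has to compute $C_{N_I\to I}$ against an arbitrary competitor and stays entirely within single-node removals. Both arguments for the forward inclusion $K\in\mathcal{K}\Rightarrow N_I\subset K$ are identical. (One incidental note: the paper cites Theorem~\ref{thm:cse}(c) where it means (b) when asserting $C_{j\to I|L}=0$; your bookkeeping of which part supplies which vanishing term is the more careful of the two.)
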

\begin{proof}
First we prove part $(a)$. If $j\in N_I$, then for every $K\subset\mathcal{V}$, $C_{j\rightarrow I|(K-\{j\})}>0$ following Theorem~\ref{thm:cse}(c). This proves both ``$\Rightarrow$".
On the other hand, suppose that $\forall K\subset\mathcal{V}, C_{j\rightarrow I|(K-\{j\})}>0$, then for $K=\mathcal{V}\supset N_I$, it follows that
$C_{j\rightarrow i|(\mathcal{V}-\{j\})}>0$. Node $j\in N_I$ since otherwise $(\mathcal{V}-\{j\})\supset N_I$ which would imply that $C_{j\rightarrow i|(\mathcal{V}-\{j\})}=0$ from Theorem~\ref{thm:cse}(b). Therefore, the two ``$\Leftarrow$"s are also proven.

Next, part $(b)$ follows directly from the contrapositive of Theorem~\ref{thm:cse}(c).

Finally, we prove part $(c)$.
Note that if $N_I\not\subset K$, then $J=N_I-K\neq\varnothing$, and so
$C_{(K\cup J)\rightarrow I}-C_{K\rightarrow I}=C_{J\rightarrow I|K}>0$. Therefore, $K\in\mathcal{K}\Rightarrow N_I\subset K$.
This implies $N_I\subset\cap_{K\in\mathcal{K}}K$.
On the other hand, if $\exists j\in\cap_{K\in\mathcal{K}}K$ with $j\notin N_I$.
Let $K\in\mathcal{K}$ and $L=K-\{j\}$. Since $j\notin N_I$, we have $N_I\subset L\subset K$, and therefore
$C_{K\rightarrow I}-C_{L\rightarrow I}=C_{j\rightarrow I|L}=0$, where the second equality follows from Theorem~\ref{thm:cse}(c).
This shows that $L\in\mathcal{K}$, contradicting with $j\in\cap_{K\in\mathcal{K}}K$.
So $j\in\cap_{K\in\mathcal{K}}K\Rightarrow j\in N_I$, which implies that $\cap_{K\in\mathcal{K}}K\subset N_I$.
Since $\mathcal{K}$ is finite, it follows that  $\cap_{K\in\mathcal{K}}K = {\operatorname{argmin}}_{K\in\mathcal{K}}K$.
\qquad
\end{proof}
\vspace{0.05in}

Based on the optimal causation entropy principle, it seems straightforward to solve the minimax optimization for the inference of $N_I$ by enumerating all subsets of $\mathcal{V}$ with increasing cardinality (starting from $\varnothing$), and terminating when a set $K$ is found to be have maximal causation entropy among all subsets of cardinality $|K|+1$ (i.e., adding any node $j$ to set $K$ does not increase the causation entropy $C_{K\to I }$). Based on Theorem~\ref{thm:cse2}, the set $K=N_I$.
However, this brute-force approach requires $\mathcal{O}(n^{|N_I|})$ causation entropy evaluations, which is computationally inefficient and therefore infeasible for the inference of real world networks which often contain large number of nodes ($n\gg 1$). Such limitation is removed only when the number of {causal parents} is moderately small, $|N_I|=\mathcal{O}(1)$.
In the following section we develop additional theory and algorithms to efficiently solve this minimax optimization problem for causal network inference.

\subsection{Computational Causal Network Inference}
Algorithmically, causal network inference via the optimal causation entropy principle should require as few computations as necessary (computational efficiency) and as few data samples as possible while retaining accuracy (data efficiency).
We introduce two such algorithms that jointly infer the causal network. 
For a given node $i$, the goal is to infer its {causal parents}, as illustrated by nodes in the shaded region of Fig.~\ref{fig23}(a).
Algorithm~\ref{alg:01} aggregatively identifies nodes that form a superset of the {causal parents}, $K\supset N_i$ (proven by Lemma~\ref{thm:cse3}, illustrated in Fig.~\ref{fig23}(b)). Start from a set $K\supset N_i$, Algorithm~\ref{alg:02} prunes away {non-causal nodes} from $K$ leaving only the {causal parents} $N_i$ (proven by Lemma~\ref{thm:cse4}, illustrated in Fig.~\ref{fig23}(c)).

\begin{figure}[htbp]
\centering
\includegraphics*[width=1\textwidth]{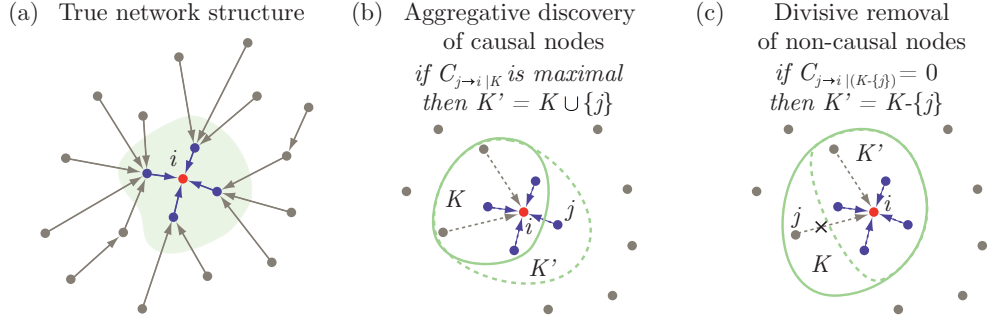}
\caption{
Causal network inference by optimal causation entropy.
(a) {Causal parents and non-causal nodes} of a node $i$. Causal network inference corresponds to identifying the {causal parents} $N_i$ (nodes in shaded region) for every node $i\in\mathcal{V}$.
(b) Nodes are added to the set $K$ in an aggregative fashion, maximizing causation entropy at each step (see Algorithm~\ref{alg:01}).
(c) Starting from a set $K\supset N_i$ ($K$ obtained by Algorithm~\ref{alg:01}), non-causal nodes are progressively removed from $K$ if their causation entropy to node $i$ conditioned on the rest of $K$ is zero (see Algorithm~\ref{alg:02}).
}
\label{fig23}
\end{figure}

\begin{lemma}[Aggregative Discovery of Causal Nodes]\label{thm:cse3}
Suppose that the network stochastic process given by Eq.~\eqref{eq:generalnonlinear} satisfies the Markov properties in Eq.~\eqref{eq:processconds}.
Let $I\subset\mathcal{V}$ and ${N}_I$ be {its causal parents}.
Define the sequences of numbers $\{x_1,x_2,\dots\}$, nodes $\{p_1,p_2,\dots\}$, and nested sets $\{K_0,K_1,K_2,\dots\}$ as: $K_0=\varnothing$, and
\begin{equation}
\begin{cases}
	x_{i} = \max_{x\in(\mathcal{V}-K_{i-1})}C_{x\rightarrow I|K_{i-1}},\\
	p_{i} = {\operatorname{argmax}}_{x\in(\mathcal{V}-K_{i-1})}C_{x\rightarrow I|K_{i-1}},\\
	K_{i} = \{p_1,p_2,\dots,p_{i}\}
\end{cases}
\end{equation}
for every $i\geq1$.
 There exists a number $q$, with $|N_I|\leq q\leq n$, such that
\begin{enumerate}[(a)]
\item The numbers $x_i>0$ for $1\leq i\leq q$ and $x_i=0$ for $i>q$.
\item The set of {causal parents} $N_I\subset K_{q}=\{x_1,x_2,\dots,x_q\}$.
\end{enumerate}
\end{lemma}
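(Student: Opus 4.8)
The plan is to reduce everything to a single equivalence: the greedy increment $x_i$ vanishes \emph{precisely} when the conditioning set $K_{i-1}$ already contains all of $N_I$. First I would record two facts that follow immediately from Theorem~\ref{thm:cse}. On the one hand, the Jensen/Gibbs inequality established in the proof of Theorem~\ref{thm:cse} shows $C_{J\rightarrow I|K}\geq 0$ for all $J,K$; hence every $x_i\geq 0$, so the dichotomy ``$x_i>0$ or $x_i=0$'' is exhaustive. On the other hand, the decomposition property Theorem~\ref{thm:cse}(d) applied with $J=\{p_i\}$ and $K=K_{i-1}$ gives $C_{K_i\rightarrow I}=C_{K_{i-1}\rightarrow I}+x_i$, so the sequence $\{C_{K_i\rightarrow I}\}$ is nondecreasing; this is a convenient sanity check, though not strictly needed below.

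Next I would prove the characterization: for $i\geq 1$, $x_i=0$ if and only if $N_I\subseteq K_{i-1}$. For the ``if'' direction, $N_I\subseteq K_{i-1}$ together with Theorem~\ref{thm:cse}(b) (no false positive) yields $C_{x\rightarrow I|K_{i-1}}=0$ for every node $x$, so $x_i=\max_{x\in\mathcal{V}-K_{i-1}}C_{x\rightarrow I|K_{i-1}}=0$ (and if $\mathcal{V}-K_{i-1}=\varnothing$ the maximum is vacuously $0$). For the contrapositive of ``only if'', suppose $N_I\not\subseteq K_{i-1}$ and choose $j\in N_I-K_{i-1}$; since $\{j\}\subseteq N_I$ and $\{j\}\not\subseteq K_{i-1}$, Theorem~\ref{thm:cse}(c) (true positive) gives $C_{j\rightarrow I|K_{i-1}}>0$, and as $j\in\mathcal{V}-K_{i-1}$ this forces $x_i\geq C_{j\rightarrow I|K_{i-1}}>0$.

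With the characterization in hand the lemma follows by bookkeeping. Since each $p_i\notin K_{i-1}$, the sets $K_0\subsetneq K_1\subsetneq\cdots$ strictly increase until they exhaust $\mathcal{V}$, so $K_n=\mathcal{V}\supseteq N_I$ and the index $q:=\min\{i\geq 0 : N_I\subseteq K_i\}$ is well defined with $q\leq n$ (with $q=0$ in the degenerate case $N_I=\varnothing$). For $1\leq i\leq q$, minimality of $q$ gives $N_I\not\subseteq K_{i-1}$, hence $x_i>0$ by the characterization; in particular $K_{i-1}\neq\mathcal{V}$, so the quantities $x_i,p_i$ are defined by a maximum/argmax over the nonempty set $\mathcal{V}-K_{i-1}$. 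For $i>q$ we have $K_{i-1}\supseteq K_q\supseteq N_I$, hence $x_i=0$. This is part (a). For part (b), $N_I\subseteq K_q$ holds by definition of $q$, and since $p_1,\dots,p_q$ are distinct, $|K_q|=q$, so $|N_I|\leq q\leq n$. (The set on the right-hand side of (b) should read $K_q=\{p_1,\dots,p_q\}$.)

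I expect the main obstacle to be conceptual rather than technical: recognizing that the entire lemma is governed by the equivalence ``$x_i=0\iff N_I\subseteq K_{i-1}$'', after which only routine checks remain (non-negativity so the dichotomy is exhaustive, strict growth of $\{K_i\}$ so the procedure terminates within $n$ steps, and the $N_I=\varnothing$ edge case). One subtlety worth flagging is that a greedy step may add non-causal nodes to $K$ before all of $N_I$ has been collected — causation entropy is not monotone in the conditioning set, as noted in the Remark preceding Theorem~\ref{thm:cse} — which is exactly why one can only assert $N_I\subseteq K_q$ with $|N_I|\leq q$ rather than equality; removing the spurious nodes is the job of the subsequent pruning step, Lemma~\ref{thm:cse4}.
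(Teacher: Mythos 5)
Your proof is correct and follows essentially the same route as the paper's: both arguments rest on Theorem~\ref{thm:cse}(c) to show the greedy increment stays positive while $N_I\not\subseteq K_{i-1}$ and on Theorem~\ref{thm:cse}(b) to show it vanishes once $N_I\subseteq K_{i-1}$, with $q$ defined (equivalently, via your characterization) as the first index at which the increments die. Your write-up is somewhat more careful than the paper's — it explicitly verifies $|N_I|\leq q$, handles the empty-argmax and $N_I=\varnothing$ edge cases, and correctly flags that the set in part (b) should be $\{p_1,\dots,p_q\}$ rather than $\{x_1,\dots,x_q\}$ — but these are refinements, not a different argument.
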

\begin{proof}
If $N_I=\varnothing$, the lemma holds trivially. Suppose that $|N_I|\geq 1$ and so $x_1>0$.

To prove $(a)$, we define $q\equiv\min_{x_i=0}(i-1)$ (if all $x_i>0$, define $q\equiv n$). By construction, $x_i>0$ when $i\leq q$ and
$x_{q+1}=0$. This implies that $N_I\subset K_q$ since otherwise there is a node $j$ with $C_{j\rightarrow I|K_q}>0\Rightarrow x_{q+1}>0$. For any $i>q$,
$N_I\subset K_q\subset K_{i-1}$, and thus $C_{j\rightarrow I|K_{i-1}}=0$ for all $j\in(\mathcal{V}-K_{i-1})$, which implies that $x_i=0$.

To prove $(b)$, we note that if there is a node $j\in N_I$ such that $j\notin K_q$, then by the definition of $x_i$ and Theorem~\ref{thm:cse}(c), it follows that
$x_{q+1}\geq C_{j\rightarrow I|K_q}>0$. This is in contradiction with the fact that $x_{i}=0$ for all $i>q$. Therefore, $N_I\subset K_q$.
\qquad
\end{proof}
\vspace{0.05in}

\begin{algorithm}[t]
\caption{{\bf Aggregative Discovery of Causal Nodes}}
\label{alg:01}
\begin{algorithmic}[1]
\REQUIRE Set of nodes $I\subset\mathcal{V}$
\ENSURE $K$ (which will include $N_I$ as its subset)
\STATE Initialize: $K\gets\varnothing$, $x\gets\infty$, $p\gets\varnothing$.
\WHILE{$x>0$}
\STATE $K\gets K\cup\{p\}$
\FOR{every $j\in(\mathcal{V}-K)$} 
\STATE{$x_j\gets C_{j\rightarrow I|K}$}
\ENDFOR
\STATE $x\gets\max_{j\in(\mathcal{V}-K)}x_j$, $p\gets{\operatorname{argmax}_{j\in(\mathcal{V}-K)}}x_j$
\ENDWHILE
\end{algorithmic}
\end{algorithm}

Algorithm~\ref{alg:01} recursively constructs the set $K_q\supset N_I$ (further denoted as $K$) as described by Lemma \ref{thm:cse3} and illustrated in Fig.~\ref{fig23}(b).
To remove indirect and spurious nodes in $K$ that do not belong to $N_I$, we apply the result of Theorem~\ref{thm:cse}(c), $C_{j\rightarrow I|(K-\{j\})}=0\Rightarrow j\notin N_I$. This gives rise to Lemma~\ref{thm:cse4} and Algorithm~\ref{alg:02}.

\vspace{0.05in}
\begin{lemma}[Progressive Removal of Non-Causal Nodes]\label{thm:cse4}
Suppose that the network stochastic process given by Eq.~\eqref{eq:generalnonlinear} satisfies the Markov properties in Eq.~\eqref{eq:processconds}.
Let $I\subset\mathcal{V}$ and ${N}_I$ be {its causal parents}.
Let $K=\{p_1,p_2,\dots,p_q\}$ such that $K\supset N_I$. Define the sequence of sets $\{K_0,K_1,K_2,\dots,K_q\}$ by $K_0=K$, and
\begin{equation}
K_i=
\begin{cases}
	K_{i-1}, & \mbox{if $C_{p_i\rightarrow I|(K_{i-1}-\{p_i\})}>0$};\\
	K_{i-1}-\{p_i\}, & \mbox{if $C_{p_i\rightarrow I|(K_{i-1}-\{p_i\})}=0$}.
\end{cases}
\end{equation}
for every $1\leq i\leq q$. Then $K_q=N_I$.
\end{lemma}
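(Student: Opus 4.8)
The plan is to prove the two inclusions $N_I\subseteq K_q$ and $K_q\subseteq N_I$ separately. The first is a loop-invariant argument showing that no causal parent is ever deleted; the second shows that every node surviving to $K_q$ must be a causal parent, and is where the real content lies, because causation entropy is \emph{not} monotone in the conditioning set (cf.\ the Remark with Examples 1 and 2), so one cannot simply reason about the final set $K_q$ directly.

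First I would establish the invariant $N_I\subseteq K_i$ for every $0\le i\le q$ by induction on $i$. The base case $N_I\subseteq K_0=K$ is the hypothesis. For the inductive step, assume $N_I\subseteq K_{i-1}$; note $p_i\in K_{i-1}$ since $p_i$ is examined only at step $i$ and each earlier step $j<i$ can delete at most $p_j$. If $K_i=K_{i-1}$ there is nothing to check. If $K_i=K_{i-1}-\{p_i\}$, then $C_{p_i\rightarrow I|(K_{i-1}-\{p_i\})}=0$, so Theorem~\ref{thm:cse2}(b) (equivalently, the contrapositive of Theorem~\ref{thm:cse}(c)) gives $p_i\notin N_I$; combined with $N_I\subseteq K_{i-1}$ this yields $N_I\subseteq K_{i-1}-\{p_i\}=K_i$. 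This completes the induction, so in particular $N_I\subseteq K_q$. I would also record that the sequence is nested and decreasing, $K_0\supseteq K_1\supseteq\dots\supseteq K_q$, since each step either keeps the set or removes a single element.

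For the reverse inclusion, take any $p_\ell\in K_q$ and show $p_\ell\in N_I$. Because $p_\ell$ survived step $\ell$, we have $C_{p_\ell\rightarrow I|(K_{\ell-1}-\{p_\ell\})}>0$. Suppose for contradiction that $p_\ell\notin N_I$. By the invariant $N_I\subseteq K_{\ell-1}$, and since $p_\ell\notin N_I$, we get $N_I\subseteq K_{\ell-1}-\{p_\ell\}$; then Theorem~\ref{thm:cse}(b) forces $C_{p_\ell\rightarrow I|(K_{\ell-1}-\{p_\ell\})}=0$, contradicting the strict inequality. Hence $p_\ell\in N_I$, so $K_q\subseteq N_I$, and with the first inclusion $K_q=N_I$. (The degenerate case $N_I=\varnothing$ is subsumed: then $K_q$ can contain no element, so $K_q=\varnothing=N_I$.)

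\textbf{Main obstacle.} The one thing to be careful about — and the reason the argument is not a one-line monotonicity observation — is that one may \emph{not} pass from ``$C_{p_\ell\rightarrow I|(K_{\ell-1}-\{p_\ell\})}>0$ at the time $p_\ell$ was tested'' to a corresponding statement with the final, smaller conditioning set $K_q-\{p_\ell\}$, since conditioning on a subset can decrease or increase causation entropy. The proof avoids ever comparing causation entropies across different conditioning sets: it uses only (i) the invariant that $N_I$ stays inside each $K_i$, so every deletion is legitimate, and (ii) the fact that once a conditioning set contains $N_I$, causation entropy from any node vanishes (Theorem~\ref{thm:cse}(b)). The bookkeeping point that $p_i\in K_{i-1}$ whenever step $i$ is reached also deserves an explicit sentence.
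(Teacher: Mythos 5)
Your proposal is correct and follows essentially the same route as the paper's proof: an induction establishing the invariant $N_I\subseteq K_i$ via Theorem~\ref{thm:cse}(c), and the reverse inclusion via Theorem~\ref{thm:cse}(b) applied at the step where each non-causal node is tested. The extra bookkeeping you supply (that $p_i\in K_{i-1}$ when step $i$ is reached, and the remark on why non-monotonicity is not an obstacle) is a welcome clarification but does not change the argument.
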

\begin{proof}
By definition, $K_0=K\supset N_I$. We prove that $K_q\supset N_I$ by induction. Suppose that $K_{i-1}\supset N_I$.
If node $p_i\in N_I$, then $C_{p_i\rightarrow I|(K_{i-1}-\{p_i\})}>0$ by Theorem~\ref{thm:cse}(c) and therefore $K_i=K_{i-1}\supset N_I$.
If node $p_i\notin N_I$, then $K_i\supset K_{i-1}-\{p_i\}\supset N_I$.

Next we prove that $K_q\subset N_I$. Suppose that node $p_i\notin N_I$.
Since $K_{i-1}\supset N_I$, the causation entropy $C_{p_i\rightarrow I|(K_{i-1}-\{p_i\})}=0$ by Theorem~\ref{thm:cse}(b), and so
$K_i=K_{i-1}-\{p_i\}$. Therefore, $p\notin K_i\supset K_q$, which implies that $K_q\subset N_I$ (contrapositive).
\qquad
\end{proof}
\vspace{0.05in}

Algorithm~\ref{alg:02} iteratively removes {nodes that are not causal parents} from a set $K$ until the set converges to $N_I$ as described by Lemma~\ref{thm:cse4} and illustrated in Fig.~\ref{fig23}(c). 

\begin{algorithm}[t]
\caption{{\bf Progressive Removal of Non-Causal Nodes}}
\label{alg:02}
\begin{algorithmic}[1]
\REQUIRE Sets of nodes $I\subset\mathcal{V}$ and $K\subset\mathcal{V}$
\ENSURE$ \hat{N}_I$ (inferred set of {causal parents} of $I$)
\FOR{every $j\in K$}
\IF{$C_{j\rightarrow I|{(K-\{j\})}}=0$}
\STATE{$K\gets K-\{j\}$}
\ENDIF
\ENDFOR
\STATE{$\hat{N}_I\gets K$}
\end{algorithmic}
\end{algorithm}

Jointly, Algorithms~\ref{alg:01} and~\ref{alg:02} can be applied to identify the {causal parents} of each node, thus inferring the entire causal network\footnote{Numerically estimated causation entropy is always positive due to finite sample size and numerical precision. 
In practice, one needs to use a statistical test (e.g., permutation test as described in Section 4) to examine
the conditions $x>0$ in Algorithm~\ref{alg:01} and $C_{j\rightarrow I|(K-\{j\})}=0$ in Algorithm~\ref{alg:02}.}.

{
\begin{remark}
There exists a number of algorithms for the problem of network inference, and we will comment on two most relevant techniques. First, we note that the ARACNE algorithm~\cite{Margolin2004} attempts to infer a (non-causal) interaction network based on mutual information. The ARACNE algorithm first computes the mutual information between all pairs of nodes/variables, filtering out the nonsignificant ones, and then enumerates through all triplets and removes links based on the data processing inequality. It was proven to correctly infer the undirected network under the assumptions that (i) mutual information are estimated without error, and (ii) the network is a tree~\cite{Margolin2004}.
Second, the PC algorithm developed by Spirtes, Glymour, and Scheines removes non-causal links by potentially testing all combinations of conditioning subsets, and was proven to correctly infer general causal networks if the conditional independence between the variables can be perfectly examined~\cite{Spirtes2000}. Runge~{\it et.~al.} \cite{Runge2012PRL,Runge2012PRE} recently utilized the PC algorithm to infer causal networks by establishing the conditional dependence/independence via estimation of appropriately defined conditional mutual information between time-lagged variables. We note that whereas we utilize Algorithm~\ref{alg:02} for the divisive step in network inference, an alternative would be to utilize the PC algorithm for the divisive step. Although the accuracy versus efficiency tradeoff for such a modification has yet to be tested, we expect that it may be helpful specifically for inferring the causal parents for nodes with large degree, suggesting that in practical applications one may wish to switch back and forth between Algorithm~\ref{alg:02} and the PC Algorithm for the divisive step, depending on a node's degree.
\end{remark}}


\section{Application to Gaussian Process: Analytical Results}
In this section we make analytical comparison among three approaches to causal network inference: causation entropy, transfer entropy \cite{Schreiber2000}, and conditional Granger causality \cite{Granger1969,Granger1988}. The next section will be devoted to the exploration of the numerical properties of these approaches for general random networks. 

While information-theoretic approaches including causation entropy do not require stringent model assumptions, a linear model must be assumed to offer a fair comparison with the conditional Granger causality. As a benchmark example, we focus on the following linear discrete stochastic network dynamics
\begin{equation}\label{eq:main}
	X^{(i)}_t = \sum_{j\in{N}_i}A_{ij}X^{(j)}_{t-1} + \xi^{(i)}_t~\big(\mbox{or in matrix form:~} X_t=AX_{t-1} + \xi_t\big).
\end{equation}
Here $X^{(i)}_t\in\mathbb{R}$ represents the state of node $i$ at time $t$ ($i\in\{1,2,\dots,n\}, t\in\mathbb{N}$), $\xi^{(i)}_t\in\mathbb{R}$ represents noise, and $A_{ij}X^{(j)}_{t-1}$ models the influence of node $j$ on node $i$. Equation~\eqref{eq:main} finds application in a broad range of areas, including time series analysis (as a multivariate linear autoregressive process~\cite{Brockwell2005}), information theory (as a network communication channel~\cite{Cover2006}), and nonlinear dynamical systems (as a linearized stochastic perturbation around equilibrium states~\cite{LasotaMackey}). 
It is straightforward to check that Eq.~\eqref{eq:main} is a special case of the general network stochastic process, Eq.~\eqref{eq:generalnonlinear}, and asymptotically (as $t\rightarrow\infty$) satisfied the Markov assumptions in Eq.~\eqref{eq:processconds}.

\subsection{Analytical Properties of the Solution}
\subsubsection{Solution Formula}
Defining $X_0=\xi_0$ for convenience, the solution to Eq.~\eqref{eq:main} can be expressed as
\begin{equation}\label{eq:main3}
	X_t = \sum_{k=0}^{t}A^k\xi_{t-k}.
\end{equation}
We assume that $\xi^{(i)}_t$ are i.i.d Gaussian random variables with zero mean and finite nonzero variance, denoted as $\xi^{(i)}_t\sim N(0,\sigma_i^2)$ with $\sigma_i>0$. Therefore,
\begin{equation}\label{eq:noise2}
	\xi_t\sim N(0,S),
\end{equation}
where the covariance matrix $S$ is defined by 
$
	S_{ij}=\delta_{ij}\sigma_i^2
$
with $\delta$ denoting the Kronecker delta.
It follows that
\begin{equation}\label{eq:noise}
\begin{cases}
	\E[\xi^{(i)}_t] = 0, \\
	\cov(\xi^{(i)}_t,\xi^{(j)}_\tau) = \delta_{ij}\delta_{t\tau}.
\end{cases}
\end{equation}

Note that a random variable obtained by an affine transformation of a Gaussian variable is also Gaussian.
For example, if $Y=[Y_1;Y_2]$ is Gaussian, the distribution of $Y_1$ conditioned on $Y_2$ is also Gaussian~\cite{Eaton1983}.
The proposition below follows by expressing random variables via appropriate affine transformations of $\xi_t$'s.

\vspace{0.05in}
\begin{proposition}\label{thm:gaussian}
Let $I$ and $K$ be any subsets of $\mathcal{V}$.
Let $t\in\mathbb{N}$ and $\tau\in\{0\}\cup\mathbb{N}$.
The conditional distribution of $X^{({I})}_{t+\tau}$ given $X^{({K})}_t$ is Gaussian.
\end{proposition}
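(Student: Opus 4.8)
The plan is to show that the pair $\big(X^{(I)}_{t+\tau},\,X^{(K)}_t\big)$ is \emph{jointly} Gaussian, and then invoke the standard fact quoted just before the proposition --- conditioning one block of a jointly Gaussian vector on another block yields a Gaussian law.

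First I would use the solution formula~\eqref{eq:main3} to realize both $X_{t+\tau}$ and $X_t$ as deterministic linear images of the stacked noise vector $\Xi \equiv [\xi_0^\top,\xi_1^\top,\dots,\xi_{t+\tau}^\top]^\top$. Reindexing the sums gives $X_{t+\tau} = \sum_{s=0}^{t+\tau} A^{\,t+\tau-s}\xi_s$ and $X_t = \sum_{s=0}^{t} A^{\,t-s}\xi_s$, so there are noise-free matrices $M_1,M_2$ with $X_{t+\tau} = M_1\Xi$ and $X_t = M_2\Xi$; selecting the coordinates indexed by $I$ and $K$ yields matrices $\widetilde M_1,\widetilde M_2$ with $[X^{(I)}_{t+\tau};\,X^{(K)}_t] = [\widetilde M_1;\,\widetilde M_2]\,\Xi$. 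Because the $\xi_s$ are independent Gaussians, $\Xi\sim N\big(0,\operatorname{diag}(S,\dots,S)\big)$ is Gaussian, and hence so is every linear image of it; in particular $[X^{(I)}_{t+\tau};\,X^{(K)}_t]$ is jointly Gaussian. Taking $Y_1 = X^{(I)}_{t+\tau}$ and $Y_2 = X^{(K)}_t$ in the cited conditioning result~\cite{Eaton1983} then gives that $X^{(I)}_{t+\tau}\mid X^{(K)}_t$ is Gaussian.

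The one point deserving care is degeneracy: the conditional law is a genuine (non-degenerate) Gaussian only when $\operatorname{Cov}(X^{(K)}_t)$ is invertible. This I would verify directly from~\eqref{eq:main3} together with the independence of the $\xi_s$: $\operatorname{Cov}(X_t) = \sum_{k=0}^{t} A^k S (A^k)^\top \succeq S = \operatorname{diag}(\sigma_1^2,\dots,\sigma_n^2) \succ 0$ since every $\sigma_i>0$, so $\operatorname{Cov}(X_t)$ --- and therefore its principal submatrix $\operatorname{Cov}(X^{(K)}_t)$ --- is positive definite. I do not anticipate a real obstacle: the argument is merely tracking affine transformations of one large Gaussian vector and then citing closedness of the Gaussian family under conditioning, and the only mild bookkeeping (the possible overlap of the index blocks when $\tau=0$, or empty $I$ or $K$) is harmless.
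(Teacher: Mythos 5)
Your argument is correct and is essentially the paper's own: the authors likewise justify the proposition by expressing $X^{(I)}_{t+\tau}$ and $X^{(K)}_t$ as affine transformations of the $\xi_t$'s and invoking the fact that conditioning one block of a jointly Gaussian vector on another yields a Gaussian law. Your added non-degeneracy check via $\operatorname{Cov}(X_t)\succeq S\succ 0$ is a welcome extra (the paper defers it to a separate positive-definiteness proposition) but does not change the route.
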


\subsubsection{Covariance Matrix}
Under an affine transformation from Gaussian variable $Y$ to $Z$ as $Z=CY+d$, the mean and covariance of $Y$ and $Z$ are related by:
$\mu_Z=C\mu_Y+d$ and $\Sigma_Z = C\Sigma_Y C^\top$~\cite{Eaton1983}.
We consider covariance matrices $\Phi(\tau,t)$, where 
the $(i,j)$-th entry of $\Phi(\tau,t)$ is defined as
\begin{equation}
	\Phi(\tau,t)_{ij} \equiv \cov[x^{(i)}_{t+\tau},x^{(j)}_t].
\end{equation}
It follows from Eqs.~\eqref{eq:main3} and~\eqref{eq:noise2} that
\begin{equation}\label{eq:xdist}
	X_t\sim N(0,\Phi(0,t)),~\mbox{where}~\Phi(0,t)=\sum_{k=0}^{t}A^kS(A^k)^\top.
\end{equation}
In the following we prove a sufficient condition for the converge of the covariance matrix $\Phi(0,t)$ as time $t\rightarrow\infty$.
Denote the {\it spectral radius} of a square matrix $M$ by
\begin{equation}
	\rho_M\equiv\max\{|\lambda|:\lambda~\mbox{is an eigenvalue of $M$}\}.
\end{equation}
Note that $\rho_M=\rho_{M^\top}$ since a square matrix and its transpose have the same set of eigenvalues.
For the dynamical system defined by Eq.~\eqref{eq:main}, matrices $A$ with $|\rho_A|<1$ are the only matrices for which the underlying system poses a stable equilibrium in the absence of noise. We refer to these matrices as stable.

\vspace{0.05in}
\begin{definition}[Stable Matrix]
Matrix $M$ is {\em stable} if $\rho_M<1$.
\end{definition}
\vspace{0.05in}

The following is a known result from classical matrix theory~\cite{Horn1985}.

\vspace{0.05in}
\begin{theorem}[Convergence of Matrix Series~\cite{Horn1985}]\label{thm:matrixconv}
The matrix series $\sum_{k=0}^{\infty}M_k$ converges if the scalar series $\sum_{k=0}^{\infty}\|M_k\|$ under any induced norm $\|\cdot\|$ converges.
\end{theorem}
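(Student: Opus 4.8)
The plan is to reduce the statement to two standard facts: the Cauchy criterion for scalar series and the completeness of a finite-dimensional normed space. Fix the induced norm $\|\cdot\|$ and let $S_N=\sum_{k=0}^{N}M_k$ denote the partial sums of the matrix series; the goal is to show that $\{S_N\}$ converges in this norm. The key estimate is that for any $N>M\ge 0$, subadditivity of the norm (which holds for every norm, in particular every induced norm) gives
\begin{equation}
\|S_N-S_M\| = \Big\|\sum_{k=M+1}^{N}M_k\Big\| \le \sum_{k=M+1}^{N}\|M_k\|.
\end{equation}
By hypothesis the scalar series $\sum_{k=0}^{\infty}\|M_k\|$ converges, so its sequence of partial sums is Cauchy in $\mathbb{R}$; hence the right-hand side above tends to $0$ as $M,N\to\infty$. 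Therefore $\{S_N\}$ is a Cauchy sequence in the space of $n\times n$ matrices equipped with $\|\cdot\|$.

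To finish I would invoke the fact that any finite-dimensional normed vector space is complete (all norms on $\mathbb{R}^{n\times n}$, resp. $\mathbb{C}^{n\times n}$, being equivalent, and the space being isometrically isomorphic to a Euclidean space, which is complete). Thus the Cauchy sequence $\{S_N\}$ converges to some matrix $S$, which is by definition $\sum_{k=0}^{\infty}M_k$. Alternatively, to avoid citing completeness abstractly, one can argue entrywise: norm equivalence yields a constant $c>0$ with $|(M_k)_{ij}|\le c\,\|M_k\|$ for all $i,j$ and all $k$, so each scalar series $\sum_{k}(M_k)_{ij}$ converges absolutely in $\mathbb{R}$, hence converges; assembling the $n^2$ limits into a matrix $S$ and applying norm equivalence once more shows $\|S_N-S\|\to 0$.

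There is essentially no obstacle here: this is the textbook ``absolute convergence implies convergence in a Banach space'' argument. The only point worth a remark is that the proof uses only subadditivity of the norm, not submultiplicativity, so the hypothesis ``induced norm'' is in fact stronger than necessary — any norm on the matrix space would do, and the induced-norm hypothesis is retained only because later applications (e.g.\ $M_k=A^k$ with $\sum\|A^k\|\le\sum\|A\|^k$) exploit submultiplicativity to verify the scalar convergence.
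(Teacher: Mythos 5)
Your proof is correct. Note that the paper does not actually prove this statement --- it is quoted as a known result from classical matrix theory with a citation to Horn and Johnson --- so there is no internal proof to compare against; your argument (Cauchy partial sums via subadditivity, then completeness of the finite-dimensional matrix space, or equivalently the entrywise absolute-convergence argument) is the standard and complete justification of the cited fact. Your closing remark is also accurate: only subadditivity is used, so the conclusion holds for any norm on the matrix space, and the ``induced norm'' hypothesis matters only downstream, where Proposition 3.5 uses submultiplicativity together with Gelfand's formula to verify that the scalar series $\sum_k\|A^kS(A^k)^\top\|$ converges.
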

\vspace{0.05in}

Note that it is possible for the matrix series $\sum_{k=0}^{\infty}M_k$ to be convergent while the corresponding scalar series $\sum_{k=0}^{\infty}\|M_k\|$ diverges, analogous to the possibility of a scalar series that is convergent but not absolutely convergent.
Next we state and prove a sufficient condition under which the matrix series in Eq.~\eqref{eq:xdist} converges.

\vspace{0.05in}
\begin{proposition}[Convergence of the Covariance]\label{thm:covconv}
The series $\sum_{k=0}^{\infty}A^kS(A^k)^\top$ converges if $A$ is stable.
\end{proposition}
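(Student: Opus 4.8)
The plan is to reduce the claim to Theorem~\ref{thm:matrixconv}: it suffices to exhibit a single induced matrix norm under which the scalar series $\sum_{k=0}^{\infty}\|A^kS(A^k)^\top\|$ converges, and then convergence of the matrix series (entrywise, since all norms on a finite-dimensional space are equivalent) follows. I would work with the spectral norm $\|\cdot\|_2$, for which $\|M^\top\|_2=\|M\|_2$; submultiplicativity then gives
\begin{equation}
\|A^kS(A^k)^\top\|_2 \le \|A^k\|_2\,\|S\|_2\,\|(A^k)^\top\|_2 = \|A^k\|_2^2\,\|S\|_2 ,
\end{equation}
so the whole problem collapses to bounding the growth of $\|A^k\|_2$.

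The key step is the classical fact that the powers of a stable matrix decay geometrically. Since $A$ is stable, fix $\epsilon>0$ with $r:=\rho_A+\epsilon<1$; by Gelfand's spectral-radius formula $\rho_A=\lim_{k\to\infty}\|A^k\|_2^{1/k}$ there is a constant $C$ with $\|A^k\|_2\le Cr^k$ for every $k\ge 0$. (Equivalently one could invoke the existence of an operator norm in which $\|A\|<1$, or argue directly from the Jordan canonical form; any of these routes is standard, and the hypothesis $\rho_A<1$ is used precisely here.) Substituting into the display above,
\begin{equation}
\|A^kS(A^k)^\top\|_2 \le C^2\|S\|_2\, r^{2k},
\end{equation}
and since $0\le r^2<1$ the geometric series $\sum_{k\ge 0}C^2\|S\|_2\,r^{2k}$ converges. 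By comparison the scalar series converges, and Theorem~\ref{thm:matrixconv} yields convergence of $\sum_{k=0}^{\infty}A^kS(A^k)^\top$; the argument moreover shows that the partial sums $\Phi(0,t)$ converge at a geometric rate $\mathcal{O}(\rho_A^{2t})$.

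The only non-routine point is the geometric bound $\|A^k\|_2\le Cr^k$ with $r<1$: this is the crux, it is where stability enters, and it should be cited (Gelfand's formula, or an adapted operator norm, or the Jordan form) rather than reproved. Everything else — submultiplicativity of the spectral norm, the identity $\|M^\top\|_2=\|M\|_2$, and summation of a geometric series — is immediate.
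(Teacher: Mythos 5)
Your proof is correct and follows essentially the same route as the paper's: bound $\|A^kS(A^k)^\top\|$ by submultiplicativity of an induced norm, invoke Gelfand's formula to control $\|A^k\|$, conclude that the scalar norm series converges, and appeal to Theorem~\ref{thm:matrixconv}. The only cosmetic difference is that you package Gelfand's formula as an explicit geometric bound $\|A^k\|_2\le Cr^k$ and use comparison, whereas the paper applies the root test directly via $\lim_k\|A^kS(A^k)^\top\|^{1/k}\le\rho_A^2<1$; these are interchangeable.
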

\begin{proof}
Let $\|\cdot\|$ be any induced norm. Then $\|A^kS(A^k)^\top\| \leq \|A^k\|\cdot\|S\|\cdot\|(A^\top)^k\|$ for any $k\in\mathbb{N}$.
Gelfand's formula (see Ref.~\cite{Gelfand1941}) implies that
\begin{equation}
	\lim_{k\rightarrow\infty}\|A^k\|^{1/k} = \lim_{k\rightarrow\infty} \|(A^\top)^k\|^{1/k} = \rho_A.
\end{equation}
On the other hand, $\lim_{k\rightarrow\infty}\|S\|^{1/k}=1$.
Therefore,
\[ \lim_{k\rightarrow\infty}\|A^kS(A^k)^\top\|^{1/k} \leq \lim_{k\rightarrow\infty}\big(\|A^k\|\cdot\|S\|\cdot\|(A^\top)^k\|\big)^{1/k} = \rho_A^2 < 1,\]
where the last inequality follows from the fact that $A$ is stable. Hence the scalar series $\sum_{k=0}^\infty\|A^k S(A^k)^\top\|_2$ is convergent. 
The proposition follows by Theorem~\ref{thm:matrixconv}.
\qquad
\end{proof}
\vspace{0.05in}

For the remainder of this section, it will be assumed that $A$ is stable in Eq.~\eqref{eq:main}. 
As $t\rightarrow\infty$, 
we drop the second argument in $\Phi(0,t)$ and define the {\it asymptotic covariance matrix}
\begin{equation}\label{eq:cov}
\Phi(0)\equiv\lim_{t\rightarrow\infty}\Phi(0,t)=\sum_{k=0}^{\infty}A^kS(A^k)^\top.
\end{equation}
It follows that $\Phi(0)$ satisfies an algebraic equation given by the proposition below.

\vspace{0.05in}
\begin{proposition}[Asymptotic Covariance Matrix]\label{thm:coveq}
Assume that $A$ is stable. The asymptotic covariance matrix $\Phi(0)=\sum_{k=0}^{\infty}A^kS(A^k)^\top$ satisfies the equation
\begin{equation}\label{eq:dlyap}
	A\Phi(0){A^\top} - \Phi(0) + S = 0.
\end{equation}
\end{proposition}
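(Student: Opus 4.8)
The plan is to verify Eq.~\eqref{eq:dlyap} directly, by inserting the series representation of $\Phi(0)$ into the left-hand side and reindexing the sum. First I would invoke Proposition~\ref{thm:covconv}: since $A$ is assumed stable, the series $\Phi(0)=\sum_{k=0}^{\infty}A^kS(A^k)^\top$ converges, so $\Phi(0)$ is a well-defined $n\times n$ matrix, and the partial sums $\Phi_m\equiv\sum_{k=0}^{m}A^kS(A^k)^\top$ satisfy $\Phi_m\to\Phi(0)$ as $m\to\infty$.

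Next, I would use that left-multiplication by the fixed matrix $A$ and right-multiplication by the fixed matrix $A^\top$ are continuous linear maps on the finite-dimensional space of $n\times n$ matrices, so the limit may be passed through: $A\Phi(0)A^\top=\lim_{m\to\infty}A\Phi_mA^\top=\lim_{m\to\infty}\sum_{k=0}^{m}A^{k+1}S(A^{k+1})^\top=\sum_{k=1}^{\infty}A^kS(A^k)^\top$, where the final equality is merely the shift of the summation index $k\mapsto k+1$. Separating off the $k=0$ term of $\Phi(0)$, namely $A^0S(A^0)^\top=S$, yields $A\Phi(0)A^\top=\Phi(0)-S$, which rearranges to $A\Phi(0)A^\top-\Phi(0)+S=0$, exactly Eq.~\eqref{eq:dlyap}.

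The only step that requires any care is the interchange of the infinite summation with multiplication by $A$ and $A^\top$; I expect this to be the sole (and quite mild) obstacle, and it is dispatched by the continuity of matrix multiplication together with the convergence already established in Proposition~\ref{thm:covconv}. The remainder is just bookkeeping on the summation index, so no further estimates are needed.
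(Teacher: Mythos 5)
Your proposal is correct and follows essentially the same route as the paper's proof: both write $A\Phi(0)A^\top$ as the index-shifted series and subtract it from $\Phi(0)$ to isolate the $k=0$ term $S$. You simply make explicit the (mild) justification for passing $A(\cdot)A^\top$ through the infinite sum, which the paper leaves implicit.
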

\vspace{-0.2in}
\begin{proof}
Since $A$ is stable, both of the two matrix series below converge:
\vspace{-0.1in}\[\begin{cases}
	\Phi(0)  = S + ASA^\top + {A^2}S(A^2)^\top + {A^3}S(A^3)^\top + \cdots\\
	A\Phi(0){A^\top} =  ASA^\top + {A^2}S(A^2)^\top + {A^3}S(A^3)^\top + \cdots\\
\end{cases}\vspace{-0.1in}\]
Subtracting the two equations gives the result of the proposition.
\qquad
\end{proof}
\vspace{0.05in}

Equation~\eqref{eq:dlyap} is a (discrete) Lyapunov equation which often appears in stability analysis and optimal control problems~\cite{Rugh1993}. Using ``$\otimes$" as the Kronecker product and ``$\vect$" for the operation of transforming a square matrix to a column vector by stacking the columns of the underlying matrix in order, Eq.~\eqref{eq:dlyap} can be converted into:
\begin{equation}\label{eq:dlyapsol}
	(I_{n^2}-A\otimes{A})\vect(\Phi(0)) = \vect(S),
\end{equation}
where $I_{n^2}$ denotes the identity matrix of size $n^2$-by-$n^2$.
Matrix $\Phi(0)$ can be computed by either solving Eq.~\eqref{eq:dlyap} through iterative methods (see Ref.~\cite{Barraud1977}) or by directly solving Eq.~\eqref{eq:dlyapsol} as a linear system.
In practice, we found the iterative approach to be numerically more efficient and stable compared to direct inversion.

Covariance matrices are in general positive semidefinite~\cite{Eaton1983}. For for the network dynamics defined in Eq.~\eqref{eq:main}, we show that they are indeed positive definite.
\vspace{0.05in}
\begin{proposition}[Positive Definiteness of the Covariance Matrix]
The covariance matrix $\Phi(0,t)$ is positive definite for any $t\in\mathbb{N}$.
The asymptotic covariance matrix $\Phi(0)$ is also positive definite.
\end{proposition}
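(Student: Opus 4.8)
The plan is to show positive definiteness of $\Phi(0,t)=\sum_{k=0}^{t}A^kS(A^k)^\top$ directly from its defining sum, and then deduce the statement for $\Phi(0)$ by a limiting argument together with the algebraic Lyapunov equation. First I would fix an arbitrary nonzero vector $v\in\mathbb{R}^n$ and compute the quadratic form
\begin{equation}
	v^\top\Phi(0,t)v = \sum_{k=0}^{t} v^\top A^kS(A^k)^\top v = \sum_{k=0}^{t}\big\|S^{1/2}(A^k)^\top v\big\|_2^2,
\end{equation}
where $S^{1/2}$ is the (invertible, since each $\sigma_i>0$) positive definite square root of the diagonal matrix $S$. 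Each summand is nonnegative, so $v^\top\Phi(0,t)v\geq 0$; to get strict positivity it suffices to exhibit one index $k$ in the range $0\leq k\leq t$ for which $S^{1/2}(A^k)^\top v\neq 0$. Taking $k=0$ does the job: $S^{1/2}v\neq 0$ because $S^{1/2}$ is invertible and $v\neq 0$. Hence $v^\top\Phi(0,t)v\geq \|S^{1/2}v\|_2^2>0$ for every $t\in\mathbb{N}$, which is exactly positive definiteness of $\Phi(0,t)$.

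For the asymptotic matrix, one route is to note $\Phi(0)=\lim_{t\to\infty}\Phi(0,t)$ (convergence guaranteed by Proposition on Convergence of the Covariance) and that $\Phi(0)-\Phi(0,t)=\sum_{k=t+1}^{\infty}A^kS(A^k)^\top$ is a limit of positive semidefinite matrices, hence positive semidefinite; therefore $v^\top\Phi(0)v = v^\top\Phi(0,t)v + v^\top(\Phi(0)-\Phi(0,t))v \geq v^\top\Phi(0,t)v \geq \|S^{1/2}v\|_2^2 > 0$. Alternatively, and perhaps cleaner, I would use the Lyapunov equation $\Phi(0) = S + A\Phi(0)A^\top$ from Proposition on the Asymptotic Covariance Matrix: since $A\Phi(0)A^\top$ is positive semidefinite (it is $B\Phi(0)B^\top$-type with $B=A$, and $\Phi(0)$ is at least positive semidefinite as a limit of positive semidefinite matrices) and $S$ is positive definite, the sum $\Phi(0)$ is positive definite.

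I do not expect a genuine obstacle here; the only point requiring a shred of care is making sure the lower bound $v^\top\Phi(0,t)v\ge\|S^{1/2}v\|_2^2$ is uniform in $t$ (it is, since the $k=0$ term is present for all $t\geq 1$ and all other terms are nonnegative), so that strict positivity survives the passage to the limit. Everything else — invertibility of $S^{1/2}$, nonnegativity of each quadratic-form summand, and the positive semidefiniteness of the tail — is routine and relies only on results already established in the excerpt.
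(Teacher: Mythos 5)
Your proof is correct and takes essentially the same route as the paper: both arguments isolate the strictly positive contribution of the noise covariance $S$ (your $k=0$ term $\|S^{1/2}v\|_2^2$; the paper's $v^\top S v\geq\min_i\sigma_i^2$ in the recursion $\Phi(0,t)=A\Phi(0,t-1)A^\top+S$) to obtain a lower bound on $v^\top\Phi(0,t)v$ that is uniform in $t$ and therefore survives the passage to the limit for $\Phi(0)$. The only cosmetic difference is that you unroll the series $\sum_{k=0}^{t}A^kS(A^k)^\top$ explicitly, whereas the paper argues by induction on the one-step recursion.
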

\begin{proof}
For any unit vector $v\in\mathbb{R}^n$, $v^\top A\Phi(0,0)A^\top v=(A^\top v)^\top A^\top v\geq{0}$.
From Eqs.~\eqref{eq:main3} and~\eqref{eq:noise2}, for any $t\in\mathbb{N}$, $\Phi(0,t)=A\Phi(0,t-1)A^\top + S$. By induction,
\begin{eqnarray}
	v^\top\Phi(0,t)v&=&v^\top{A}\Phi(0,t-1)A^\top{v}+v^\top S{v}\nonumber\\
	&\geq& \left(A^\top{v}\right)^\top\Phi(0,t-1)\left(A^\top{v}\right)+\min_{i}\sigma_i^2\geq{\min_{i}\sigma_i^2}>0.
\end{eqnarray}
This shows that $\Phi(0,t)$ is positive definite (indeed we have: $\rho_{\Phi(0,t)}\geq{\min_{i}\sigma_i^2}>0$).
Taking $t\rightarrow\infty$ in the above estimate also shows that $\Phi(0)$ is positive definite.
\qquad
\end{proof}
\vspace{0.05in}

\subsubsection{Time-Shifted Covariance Matrices}
We define the time-shifted covariance matrix $\Phi(\tau,t)$ for each $t\in\mathbb{N}$ (time) and $\tau\in\mathbb{N}$ (positive time shift between states).
If $A$ is stable, then the covariance matrix $\Phi(t,\tau)$ converges for each time shift $\tau$ as $t\rightarrow\infty$. The (asymptotic) covariance matrices with different time shifts are related by a simple algebraic equation given in the following proposition.

\vspace{0.05in}
\begin{proposition}[Relationship Between Time-Shifted Covariance Matrices]\label{thm:coveq2}
Assume that $A$ is stable. For each $\tau\in\mathbb{N}$, the following limit exists
	\[\lim_{t\rightarrow\infty}\Phi(\tau,t)=\Phi(\tau),\]
where matrix $\Phi(\tau)$ satisfies
\begin{equation}\label{eq:convmaxeq}
 	\Phi(\tau) = A\Phi(\tau-1) = A^2\Phi(\tau-2) = \dots = A^{\tau}\Phi(0).
\end{equation}
\end{proposition}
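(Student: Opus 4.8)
The plan is to reduce the statement to the single algebraic identity $\Phi(\tau,t)=A^{\tau}\,\Phi(0,t)$, valid for every $t\in\mathbb{N}$ and every integer $\tau\ge 1$, and then to pass to the limit $t\to\infty$ using the convergence of $\Phi(0,t)$ already established in Proposition~\ref{thm:covconv} and Eq.~\eqref{eq:cov}; the case $\tau=0$ is trivial.

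First I would use the solution formula~\eqref{eq:main3} to decompose the state at time $t+\tau$ into a part driven by the noise realized up to time $t$ and a part driven by the ``future'' noises $\xi_{t+1},\dots,\xi_{t+\tau}$. Reindexing $X_{t+\tau}=\sum_{k=0}^{t+\tau}A^{k}\xi_{t+\tau-k}$ by splitting off the terms with $k\ge\tau$ yields
\begin{equation}
X_{t+\tau} = A^{\tau}X_{t} + \sum_{k=0}^{\tau-1}A^{k}\xi_{t+\tau-k},
\end{equation}
where the remainder sum involves only $\xi_{s}$ with $s>t$.

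Next I would compute the matrix-valued cross covariance $\cov[X_{t+\tau},X_{t}]$, whose $(i,j)$ entry is $\Phi(\tau,t)_{ij}$. Since the $\xi_{s}$ are independent with zero mean (Eq.~\eqref{eq:noise}) and $X_{t}$ is a linear, hence measurable, function of $\xi_{0},\dots,\xi_{t}$ only, the remainder term $\sum_{k=0}^{\tau-1}A^{k}\xi_{t+\tau-k}$ is uncorrelated with $X_{t}$; by bilinearity of covariance the cross term vanishes, giving $\cov[X_{t+\tau},X_{t}]=\cov[A^{\tau}X_{t},X_{t}]=A^{\tau}\cov[X_{t},X_{t}]$, i.e. $\Phi(\tau,t)=A^{\tau}\Phi(0,t)$. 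This uncorrelatedness claim --- making precise that $X_{t}$ depends only on $\xi_{0},\dots,\xi_{t}$ while the remainder depends only on strictly later noises --- is the one step that deserves care; the rest is bookkeeping.

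Finally I would let $t\to\infty$. By Proposition~\ref{thm:covconv} and Eq.~\eqref{eq:cov} we have $\Phi(0,t)\to\Phi(0)$, and left multiplication by the fixed matrix $A^{\tau}$ is continuous, so $\Phi(\tau,t)=A^{\tau}\Phi(0,t)\to A^{\tau}\Phi(0)$; hence the limit $\Phi(\tau)$ exists and equals $A^{\tau}\Phi(0)$. The stated chain $\Phi(\tau)=A\Phi(\tau-1)=A^{2}\Phi(\tau-2)=\dots=A^{\tau}\Phi(0)$ then follows immediately, since each $\Phi(j)=A^{j}\Phi(0)$ and $A^{\tau}\Phi(0)=A^{\tau-j}\bigl(A^{j}\Phi(0)\bigr)$ for $0\le j\le\tau$.
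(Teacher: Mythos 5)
Your proof is correct and takes essentially the same route as the paper's: both arguments rest on the observation that the noise terms realized after time $t$ are uncorrelated with $X_t$, the paper deriving the one-step identity $\Phi(\tau,t)=A\,\Phi(\tau-1,t)$ from the dynamics and iterating, while you obtain $\Phi(\tau,t)=A^{\tau}\Phi(0,t)$ in one shot by splitting the solution formula as $X_{t+\tau}=A^{\tau}X_t+\sum_{k=0}^{\tau-1}A^{k}\xi_{t+\tau-k}$. The passage to the limit $t\to\infty$ via the convergence of $\Phi(0,t)$ and continuity of multiplication by $A^{\tau}$ is identical in both.
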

\begin{proof}
For every $\tau\in\mathbb{N}$ and $t\in\mathbb{N}$, 
it follows that
\begin{equation}\label{eq:entrywiseconv}
\Phi(\tau,t)_{ij} =  \E\Big[\sum_{k=1}^{n}a_{ik}x^{(k)}_{t+\tau-1}+\xi^{(i)}_{t+\tau},x^{(j)}_t\Big]
		= \sum_{k=1}^{n}a_{ik}\Phi(\tau-1,t)_{kj}.
\end{equation}
Therefore, the matrix $\Phi(\tau,t)$ satisfies
\begin{equation}
 	\Phi(\tau,t) = A\Phi(\tau-1,t) = A^2\Phi(\tau-2,t) = \dots = A^{\tau}\Phi(0,t).
\end{equation}
Taking the limit as $t\rightarrow\infty$ in and making use of the fact that $A$ is stable, we reach the conclusion of the proposition.
\qquad
\end{proof}
\subsection{Analytical Expressions of Causation Entropy}
Here we provide analytical expressions for causation entropy of the Gaussian process described in Eq.~\eqref{eq:main}. Because causation entropy can be interpreted as a generalization of both transfer entropy and conditional Granger causality under the appropriate selection of nodes $i$ and $j$ and the conditioning set $K$, these results also provide analytical expressions for transfer entropy and conditional Granger causality.

\subsubsection{Joint entropy expressions}
Let $\Sigma$ be the covariance matrix of a multivariate Gaussian variable $X\in\mathbb{R}^n$ (i.e., $X\sim N({\boldsymbol \mu},\Sigma)$), it follows that~\cite{Ahmed1989}
\begin{equation}\label{eq:gaussianentropy}
	h(X) = \frac{1}{2}\log[\operatorname{det}(\Sigma)] + \frac{1}{2}n\log(2\pi e).
\end{equation}
Note that the right hand side of the above is actually an upper bound for a general random variable (i.e., the equality $``="$ becomes inequality $``\leq"$~\cite{Cover2006}). Therefore, a Gaussian variable maximizes entropy among all variables of equal covariance.

The random variable $X_t$ is Gaussian and converges to $N(0,\Phi(0))$ as $t\rightarrow\infty$.
For an arbitrary subset of the nodes $K=\{k_1,k_2,\dots,k_\ell\}$. The joint entropy is
\begin{equation}
	h(X^{(K)}) = \lim_{t\rightarrow\infty}h(X^{(K)}_t) = \frac{1}{2}\log(|\Phi_{KK}(0)|) + \log(2\pi e).
\end{equation}
Here we have introduced the notation 
\begin{equation}
	\Phi_{IJ}(0)\equiv P(I)\Phi(0)P(J)^\top,
\end{equation} 
where for a set $K=\{k_1,k_2,\dots,k_\ell\}$, $P(K)$ is the $\ell$-by-$n$ projection matrix defined as
\begin{equation}
	P(K)_{ij}=\delta_{k_i,i}
\end{equation}

\subsubsection{Causation Entropy}For the Gaussian process given by Eq.~\eqref{eq:main}, we obtain the analytical expression of causation entropy as 
\begin{equation}\label{eq:cseformula}
	C_{J\rightarrow I|K} = \frac{1}{2}\log
	\left(\frac{\operatorname{det}\left[\Phi(0)_{II}-\Phi(1)_{IK}\Phi(0)_{KK}^{-1}\Phi(1)_{IK}^\top\right]}
	{\operatorname{det}\left[\Phi(0)_{II}-\Phi(1)_{I,K\cup J}\Phi(0)_{K\cup J,K\cup J}^{-1}\Phi(1)_{I,K\cup J}^\top\right]}\right)
\end{equation}
If $J=\{j\}$ and $I=\{i\}$, this equation simplifies to
\begin{equation}\label{eq:cseformula2}
	C_{j\rightarrow i|K} = \frac{1}{2}\log
	\left(\frac{\Phi(0)_{ii}-\Phi(1)_{iK}\Phi(0)_{KK}^{-1}\Phi(1)_{iK}^\top}
	{\Phi(0)_{ii}-\Phi(1)_{i,K\cup\{j\}}\Phi(0)_{K\cup\{j\},K\cup\{j\}}^{-1}\Phi(1)_{i,K\cup\{j\}}^\top}\right).
\end{equation}

\subsubsection{Transfer Entropy}
Recall that causation entropy recovers transfer entropy when $K=\{i\}$. Letting $K=\{i\}$ in the formula above gives the transfer entropy (with single time lag) for multivariate Gaussian variables:
\begin{eqnarray}\label{eq:te1}
	&~&T_{j\rightarrow i} = C_{j\rightarrow i|i} = \frac{1}{2}\log\Big(1+\frac{\alpha_{ij}}{\beta_{ij} - \alpha_{ij}}\Big),\nonumber\\
	&~&\mbox{where}
	\begin{cases}
	\alpha_{ij}\equiv \big(\Phi(0)_{ii}\Phi(1)_{ij}-\Phi(0)_{ij}\Phi(1)_{ii}\big)^2,\\
	\beta_{ij}\equiv\big(\Phi(0)_{ii}^2-\Phi(1)_{ii}^2\big)\big(\Phi(0)_{ii}\Phi(0)_{jj}-\Phi(0)_{ij}^2\big).
\end{cases}
\end{eqnarray}
It follows that $\beta_{ij}\geq\alpha_{ij}\geq{0}$, and therefore
$T_{j\rightarrow i}\geq 0$ ($T_{i\rightarrow i}=0$). Furthermore, 
\begin{equation}\label{eq:te2}
	T_{j\rightarrow i}=0~\Longleftrightarrow~\alpha_{ij}=0~\Longleftrightarrow~\sum_{k=1}^{n}A_{ik}\big(\Phi(0)_{ii}\Phi(0)_{kj}-\Phi(0)_{ij}\Phi(0)_{ki}\big)=0.
\end{equation}

\subsubsection{Conditional Granger Causality}
As shown in Ref.~\cite{Barnett2009}, when the random variables are Gaussian, expression of Granger Causality is equivalent as that of transfer entropy (and also causation entropy introduced here). In fact, for Gaussian variables, the Granger Causality from $j$ to $i$ without conditioning equals $2C_{j\rightarrow i}$, while the conditional Granger causality (with full conditioning) equals $2C_{j\rightarrow i|(\mathcal{V}-\{j\})}$.
\subsection{Analytical Results for Directed Linear Chain, Directed Loop, and Directed Trees}
We derive expressions of transfer entropy and causation entropy for several classes of networks including directed linear chains, directed loops, and directed trees. These results highlight that although transfer entropy may indicate the direction of information flow between two nodes, its application to causal network inference is often unjustified as it cannot distinguish between direct and indirect causal relationships (unless appropriate conditioning is adopted as in causation entropy).

\subsubsection{Directed Linear Chain}
Denote a directed linear chain of $n$ nodes as
\begin{equation}\label{eq:linearchain}
	1\rightarrow 2\rightarrow 3\dots \rightarrow n.
\end{equation}
For simplicity we assume that all links have the same weight $w=1$. Consequently, the corresponding adjacency matrix $A=[A_{ij}]_{n\times n}$ is given by
\begin{equation}\label{eq:linearchain2}
	A_{ij} = \delta_{i,j+1}.
\end{equation}
It follows that $\rho_A=0$ and therefore $A$ is stable. 
By inverting the lower-triangular matrix $(I_{n^2}-A\otimes{A})$ in Eq.~\eqref{eq:dlyapsol} and applying Eq.~\eqref{eq:convmaxeq}, we obtain that\begin{equation}
\begin{cases}
	\Phi(0)_{ij} = \delta_{ij}\sum_{k=1}^{j}\sigma_k^2,\\
	\Phi(1)_{ij} = \delta_{i,j+1}\sum_{k=1}^{j}\sigma_k^2.
\end{cases}
\end{equation}
Letting $K=\varnothing$ and $K=\{i\}$ respectively in Eqs.~\eqref{eq:cseformula2} and~\eqref{eq:te1}, it follows that
\begin{equation}
C_{j\rightarrow i} = T_{j\rightarrow i} = \frac{1}{2}\delta_{i,j+1}\log\left(1+\frac{\sum_{k=1}^{j}\sigma_k^2}{\sigma_i^2}\right).\label{eq:linear_chain1}
\end{equation}

Therefore, for the directed linear chain defined in Eq.~\eqref{eq:linearchain2}, transfer entropy $T_{j\rightarrow i}=C_{j\rightarrow i}$, and it is positive
if and only if there is a direct link $j\rightarrow i$, i.e.,
\begin{equation}
	C_{j\rightarrow i} = T_{j\rightarrow i}>0~\Leftrightarrow~A_{ij}=1,~~\mbox{and}~~
	C_{j\rightarrow i} = T_{j\rightarrow i}=0~\Leftrightarrow~A_{ij}=0.
\end{equation}
Interestingly, both causation entropy $C_{j\rightarrow{j+1}}$ and transfer entropy $T_{j\rightarrow{j+1}}$ increase monotonically as a function of $j$, and the values only depend on part of the chain from the top node (node $1$) to node $j+1$ and not on the rest of the network.
Interpreting the monotonicity in term of the network structure, the closer node $j$ is to the end of the chain, effectively the more information is transferred through the directed link $j\rightarrow{j+1}$.
Figure.~\ref{newfig:6}(a) illustrates this via a network of $n=1000$ nodes.

\subsubsection{Directed Loop}
Consider now a directed loop with $n$ nodes, denoted as
\begin{equation}\label{eq:directloop}
	1\rightarrow 2\rightarrow 3\dots \rightarrow n\rightarrow 1.
\end{equation}
Let $w>0$ be the uniform link weight. It follows that $\rho_A=w$. Thus, for the adjacency matrix $A$ to be stable, we must have $w<1$.
To keep the symmetry of the problem, we further assume that the variance of noise is the same at each node, therefore
\begin{equation}
	\sigma^2\equiv \sigma_1^2=\sigma_2^2=\dots\sigma_n^2.
\end{equation}
The entries in $\Phi(0,t)$ satisfy
\begin{equation}\label{eq:phi0t}
	\Phi(0,t)_{ij} = w^2\Phi(0,t-1)_{p_i,p_j} + \delta_{ij}\sigma^2,
\end{equation}
where $p_i$ denotes the unique node that directly links to node $i$.
Taking the limit as $t\rightarrow\infty$ and solve the resulting recursive equations, we obtain that for
\begin{equation}\label{eq:loopphi0}
	\begin{cases}
		\Phi(0)_{ij} = \delta_{ij}\sigma^2/(1-w^{2}),\\
		\Phi(1)_{ij} = \delta_{p_i,j}\sigma^2 w/(1-w^{2}).
	\end{cases}
\end{equation}
where the second equation is obtained through $\Phi(0)_{ij}$ and Eq.~\eqref{eq:convmaxeq}.
Letting $K=\varnothing$ and $K=\{i\}$ respectively in Eqs.~\eqref{eq:cseformula2} and~\eqref{eq:te1}, we conclude that
\begin{equation}
C_{j\rightarrow i} = T_{j\rightarrow i} = \frac{1}{2}\delta_{p_i,j}\log\Big(\frac{1}{1-w^2}\Big).\label{eq:directed_loop1}
\end{equation}
Note that causation entropy and transfer entropy equal and do not depend on the noise variation $\sigma^2$, and they are positive if and only if there is a direct link $j\rightarrow i$, i.e.,
\begin{equation}
	C_{j\rightarrow i} = T_{j\rightarrow i}>0~\Leftrightarrow~A_{ij}=1,~~\mbox{and}~~
	C_{j\rightarrow i} = T_{j\rightarrow i}=0~\Leftrightarrow~A_{ij}=0.
\end{equation}
By symmetry, causation entropy and transfer entropy through each directed link is the same. As the link weight $w$ increases in $(0,1)$, both increase monotonically in $(0,\infty)$. The larger the link weight $w$ is, the larger amount of information is transferred via each directed link, as intuitively expected. Also see Fig.~\ref{newfig:6}(b) as an illustration.

\begin{figure}[tbp]
\centering
\includegraphics*[width=1\textwidth]{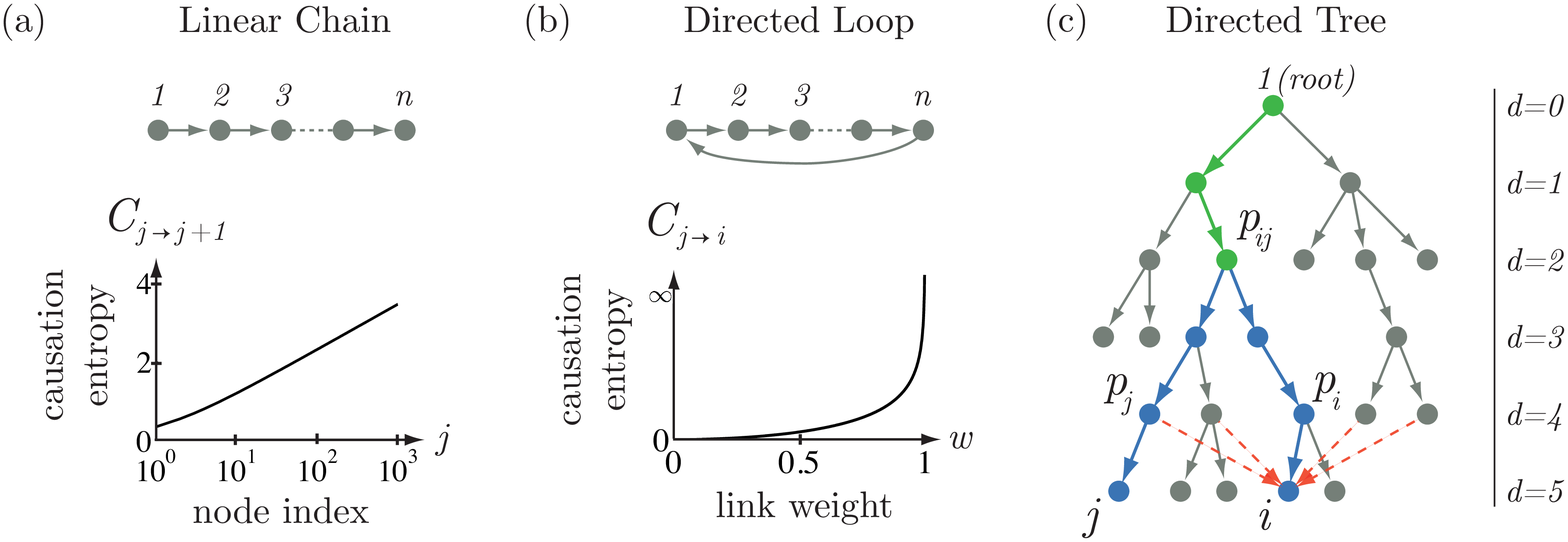}
\caption{
Causation Entropy and transfer entropy for a Gaussian process on three classes of networks. 
(a) For directed linear chains, both causation entropy and transfer entropy correctly identify the network as $C_{j\to i}=T_{j\to i}>0$ iff $i=j+1$ (otherwise $C_{j\to i}=T_{j\to i}=0 $). The dependence of $C_{j\to j+1}$ on node index $j$ is given by Eq.~\eqref{eq:linear_chain1} and plotted.
(b) For directed loops, causation entropy and transfer entropy again correctly identify the network topology with $C_{j\to i}=T_{j\to i}>0$ iff $j\rightarrow i$. The dependence of $C_{j\to i}$ on link weight $w$ is given by Eq.~\eqref{eq:directed_loop1} as shown. 
(c) For directed trees, causation entropy given by Eq.~\eqref{eq:directed_tree1} correctly identifies the network topology based on Eq.~\eqref{eq:directed_tree3}. In contrast, transfer entropy without appropriate conditioning infers many links that do not exist in the actual network (red dashed lines), as described by Eq.~\eqref{eq:directed_tree2}.
}\label{newfig:6}
\end{figure}

\subsubsection{Directed Trees}
We now consider directed tree networks with uniform link weight $w=1$ and unit node variance\footnote{Similar results hold for trees with general link weights and node variances but the corresponding equations are too cumbersome to list.}
\begin{equation}
	\sigma_1^2=\sigma_2^2=\dots\sigma_n^2=1.
\end{equation}
A directed tree has one root (indexed as node $1$ without loss of generality) and each non-root node $i$ ($i\neq{1}$) has exactly one {\it ancestor}, denoted by $p_i$.
The corresponding adjacency matrix $A=[A_{ij}]_{n\times n}$ thus satisfies
\begin{equation}
	A_{ij} = (1-\delta_{i1})\delta_{i,p_i}.
\end{equation}
It can be shown that $\rho_A=0$.
For $i\neq{1}$, we denote the directed path from $1$ to $i$ by
\begin{equation}
	1=p^{(d_i)}_i\rightarrow{p^{(d_i-1)}_i}\rightarrow{}\dots\rightarrow{p^{(1)}_i\equiv p_i}\rightarrow{p^{(0)}_i\equiv i},
\end{equation}
where $d_i$ is the {\it depth} of node $i$ in the tree (for node $1$, we define its depth $d_1=0$).
Thus, the highest node in the tree is the root, and the lowest nodes have the greatest depth.
For any two nodes $(i,j)$, we denote their {\it lowest common ancestor} by $p_{ij}$, i.e.,
\begin{equation}\label{eq:treelca}
p_{ij} = \operatorname*{arg\,max}_{\{k|\exists{\ell,m\geq{0}},s.t.,p_i^{(\ell)}=p_j^{(m)}\}}d_k.
\end{equation}

The covariance matrix $\Phi(0,t)$ satisfies
\begin{equation}
	\Phi(0)_{ij} = \delta_{1i}\delta_{1j}\sigma_1^2 + (1-\delta_{1i})(1-\delta_{1j})[\Phi(0)_{p_i,p_j} + \delta_{ij}].
\end{equation}
We solve these recursive equations to obtain
\begin{equation}\label{eq:dtreephi}
\begin{cases}
\Phi(0)_{ij} = \delta_{d_i,d_j}(d_{p_{ij}}+1)\\
\Phi(1)_{ij} = (1-\delta_{i1})\delta_{d_i,d_j+1}(d_{p_{ij}}+1),
\end{cases}
\end{equation}
where $p_{ij}$ is defined in Eq.~\eqref{eq:treelca} and $\Phi(1)_{ij}$ is obtained by $\Phi(1)=A\Phi(0)$.

We calculate causation entropy and transfer entropy through Eqs.~\eqref{eq:cseformula2} and~\eqref{eq:te1}:
\begin{equation}\label{eq:directed_tree1}
	C_{j\rightarrow i} = T_{j\rightarrow i} = \frac{1}{2}\delta_{d_i,d_{j}+1}\log\frac{(d_i+1)(d_j+1)}{(d_i+1)(d_j+1)-(d_{p_{ij}}+1)^2}.
\end{equation}
Note that  in general $0\leq d_{p_{ij}}\leq\min\{d_i,d_j\}$. Thus $C_{j\rightarrow i}=T_{j\rightarrow i}\leq\frac{1}{2}\log(1+d_i)$, with equality if and only if $j$ is the ancestor of $i$ (i.e., $j=p_i=p_{ij}$).
Therefore, we have
\begin{equation}\label{eq:directed_tree2}
\begin{cases}
	T_{j\rightarrow i}>0~\Leftrightarrow~d_i=d_j+1~\Leftarrow~A_{ij}=1~(\mbox{but}~T_{j\rightarrow i}>0\not\Rightarrow A_{ij}=1);\\
	T_{j\rightarrow i}=0~\Leftrightarrow~d_i\neq d_j+1~\Rightarrow~A_{ij}=0~(\mbox{but}~A_{ij}=0\not\Rightarrow T_{j\rightarrow i}=0).
\end{cases}
\end{equation}
In other words, transfer entropy being positive (without appropriate conditioning) corresponds to a superset of the links that actual exist in a directed tree, and the inferred network using this criterion will potentially contain many false positives. See Fig.~\ref{newfig:6}(c) as an example.
On the other hand, for a given node $i\neq 1$, we have
\begin{equation}\label{eq:directed_tree3}
\begin{cases}
	p_i = \operatorname*{arg\,max}_{j}C_{j\rightarrow i},\\
	C_{j\rightarrow i|\{p_i\}} = 0.
\end{cases}
\end{equation}
Therefore, for each node $i$, the node $j$ that maximizes causation entropy $C_{j\rightarrow i}$ among all nodes is inferred as the {causal parent} of $i$. Conditioned on this node, the causation entropy from any other node to $i$ will become zero, indicating no other directed links to node $i$. This causation entropy based procedure allows for exact and correct inference of the underlying causal network, a directed tree.


\section{Application to Gaussian Process: Numerical Results}
In this section, we illustrate that causal network inference by optimal causation entropy is reliable and efficient for the Gaussian process, Eq.~\eqref{eq:main}, on large random networks. 
\subsection{Random Network Model and Time Series Generation}
We consider signed Erd\H{o}s-R\'enyi networks, which is a generation of its original model~\cite{BollobasBook}. In particular, each network consists of $n$ nodes ($\mathcal{V}=\{1,2,\dots,n\}$), such that each directed link $j\rightarrow i$ is formed independently with equal probability $p$, giving rise to a directed network with approximately $n^2p$ directed links. For generality, we allow the link weight of each link $j\rightarrow i$ to be either positive ({$A_{ij}=w$}) or negative ({$A_{ij}=-w$}), with equal probability. Recalling that the network adjacency matrix $A$ is defined entry-wise by {$A_{ij}\in\{w,-w\}$} iff there exists a directed link $j\rightarrow i$ (otherwise $A_{ij}=0$), the link weight $w$ may be selected to tune the spectral radius $\rho(A)$ of matrix $A$.

We generate time series from the stochastic equation, Eq.~\eqref{eq:main}, where matrix $A$ is obtained from the network model and random variables $\xi_t\sim\mathcal{N}(0,S)$, where the covariance matrix $S$ is taken to be the identity matrix of size $n\times n$. To reduce transient effects, for a given sample size $T$ we solve Eq.~\eqref{eq:main} for $10T$ time steps and only use the final $10\%$ of the resulting time series.

To summarize, our numerical experiments contain parameters: $n$~(network size), $p$~(connection probability), $\rho(A)$~(spectral radius of $A$), and $T$~(sample size).

\subsection{Practical Considerations for Network Inference}\label{sec:practical}
We have established by Theorems \ref{thm:cse} and \ref{thm:cse2} and Lemmas \ref{thm:cse3} and \ref{thm:cse4} that in theory, exact network inference can be achieved by optimal causation entropy, which involves implementing Algorithms \ref{alg:01}~(Aggregative Discovery) and \ref{alg:02}~(Progressive Removal) to correctly identify the set of {causal parents} $N_i$ for each node $i\in\mathcal{V}$. 

In practice, the success of our optimal causation entropy approach (and in fact, any entropy-based approaches) depends crucially on reliable estimation of the relevant entropies in question from data. This leads to two practical challenges.

(1) Entropies must be {\it estimated} from {\it finite} time series data. While there are several techniques for estimating entropies for general multivariate data, the accuracy of such estimations are increasingly inaccurate for small sample sizes and high-dimensional random variables~\cite{Paninski2003}. In this research, we side-step this computational complexity by using knowledge of the asymptotic functional form for the entropy of the Gaussian Process, where the covariance matrices $\Phi(0)$ and $\Phi(1)$ in Eqs.~\eqref{eq:cseformula} and~\eqref{eq:cseformula2} are estimated directly from the time series data.

(2) Application of the theoretical results rely on determining whether the causation entropy $C_{j\rightarrow i|K}>0$ or $C_{j\rightarrow i|K}=0$. However, the estimated value of $C_{j\rightarrow i|K}$ based on sample covariances is necessarily positive given finite sample size and finite numerical precision. Therefore, a statistical test must be used to assess the significance of the observed positive causation entropy.
We here adopt a widely used approach in non-parametric statistics, called the {\it permutation test}\footnote{The idea of a permutation test is to perform (large number of) random permutations of a subset of the data leaving the rest unchanged, giving rise to an empirical distribution of the static of interest. The observed statistic from the original data is then located on this empirical distribution in order to associate its statistical significance~\cite{Good2005}.}. Specifically, we propose the following permutation test based on the null hypothesis that causation entropy $C_{j\rightarrow i|K}=0$: first perform $r$ random (temporal) permutations of the time series $\{X^{(j)}_t\}$, leaving the rest of the data unchanged; we then construct an empirical cumulative distribution $\hat{F}(x)$ of the estimated causation entropy from the permuted time series\footnote{The accuracy of this empirical distribution and therefore the permeation test increases with increasing number of permutations $r$. However, as $r$ increases, the computational complexity also increases, scaling roughly as a linear function of $r$.}; finally, given a prescribed significance level $\theta$, the observed $C_{j\rightarrow i|K}=c$ is declared {\it significant} (i.e., the null hypothesis is rejected at level $\theta$) if $\hat{F}(c)>\theta$.

To summarize, the inference algorithms contain two parameters to be used in the permutation test: $r$~(number of random permutations) and $\theta$~(significance threshold).

\subsection{Comparing Optimal Causation Entropy, Conditional Granger, and transfer entropy}\label{sec:compare}
Here we compare the performance of three approaches of causal network inference: conditional Granger (see for example Ref.~\cite{Gao2011,Guo2008}), transfer entropy (see Ref.~\cite{Vicente2011JCN} and the references therein), and optimal causation entropy (oCSE). In particular, the conditional Granger and transfer entropy approaches under consideration both estimate the entropy $C_{j\rightarrow i|K}$ for each pair of nodes $(i,j)$ independently, with the choice of $K=\mathcal{V}-\{j\}$ in the case of conditional Granger and $K=\{i\}$ in the case of transfer entropy. In both approaches, a causal link $j\rightarrow i$ is inferred if the observed $C_{j\rightarrow i|K}>0$ is assessed as significant under the permutation test. The oCSE approach combines Algorithms~\ref{alg:01} and~\ref{alg:02} and the permutation test is used once per each iteration (line 2 of both algorithms).

The performance of the three approaches are quantified by two types of inference error: false negative ratio, denoted as $\varepsilon_{-}$ and defined as the fraction of links in the original network that are not inferred; and false positive ratio, denoted as $\varepsilon_{+}$ and defined as the fraction of non-existing links in the original networks that are inferred. In terms of the adjacency matrix $A$ of the original network and that of the inferred network $\hat{A}$, these ratios can be computed as
\begin{equation}\label{eq:inferror}
\begin{cases}
\varepsilon_{-}\equiv \dfrac{\mbox{number of $(i,j)$ pairs with $\chi_0(A)_{ij}=1$ and $\chi_0(\hat{A})_{ij}=0$ }}{\mbox{number of $(i,j)$ pairs with $\chi_0(A)_{ij}=1$}},\vspace{0.05in}\\
\varepsilon_{+}\equiv \dfrac{\mbox{number of $(i,j)$ pairs with $\chi_0(A)_{ij}=0$ and $\chi_0(\hat{A})_{ij}=1$ }}{\mbox{number of $(i,j)$ pairs with $\chi_0(A)_{ij}=0$}}.
\end{cases}
\end{equation}

For the random networks considered here, we found that the Algorithm~\ref{alg:01} achieves almost the same accuracy as the combination of Algorithms~\ref{alg:01} and~\ref{alg:02}. We therefore present results which are based on the numerical application of Algorithm~\ref{alg:01} alone, leaving detailed numerical study of Algorithm~\ref{alg:02} to future work.

Figure~\ref{newfig:7}(a-b) shows that although the conditional Granger approach is theoretically correct and works well for small network size with sufficient samples, it suffers from increasing inference error as the network size increases and become extremely inaccurate when the network size $n$ starts to surpass the sample size $T$. Such limitation is overcome by the oCSE approach, where both the false positive and false negative ratios remain close to zero as the network size increases. The reason that oCSE is accurate even as $n$ increases is that it builds the {causal parent} set in an {\it aggregative} manner, therefore relying only on estimating entropy in relatively low dimensions (roughly the same dimension as the number of {causal parents} per node). In sharp contrast, the conditional Granger approach requires the estimation of entropy in the full $n$-dimensional space and therefore requires many (potentially exponentially) more samples to achieve the same accuracy when $n$ becomes large.

\begin{figure}[htbp]
\centering
\includegraphics*[width=1\textwidth]{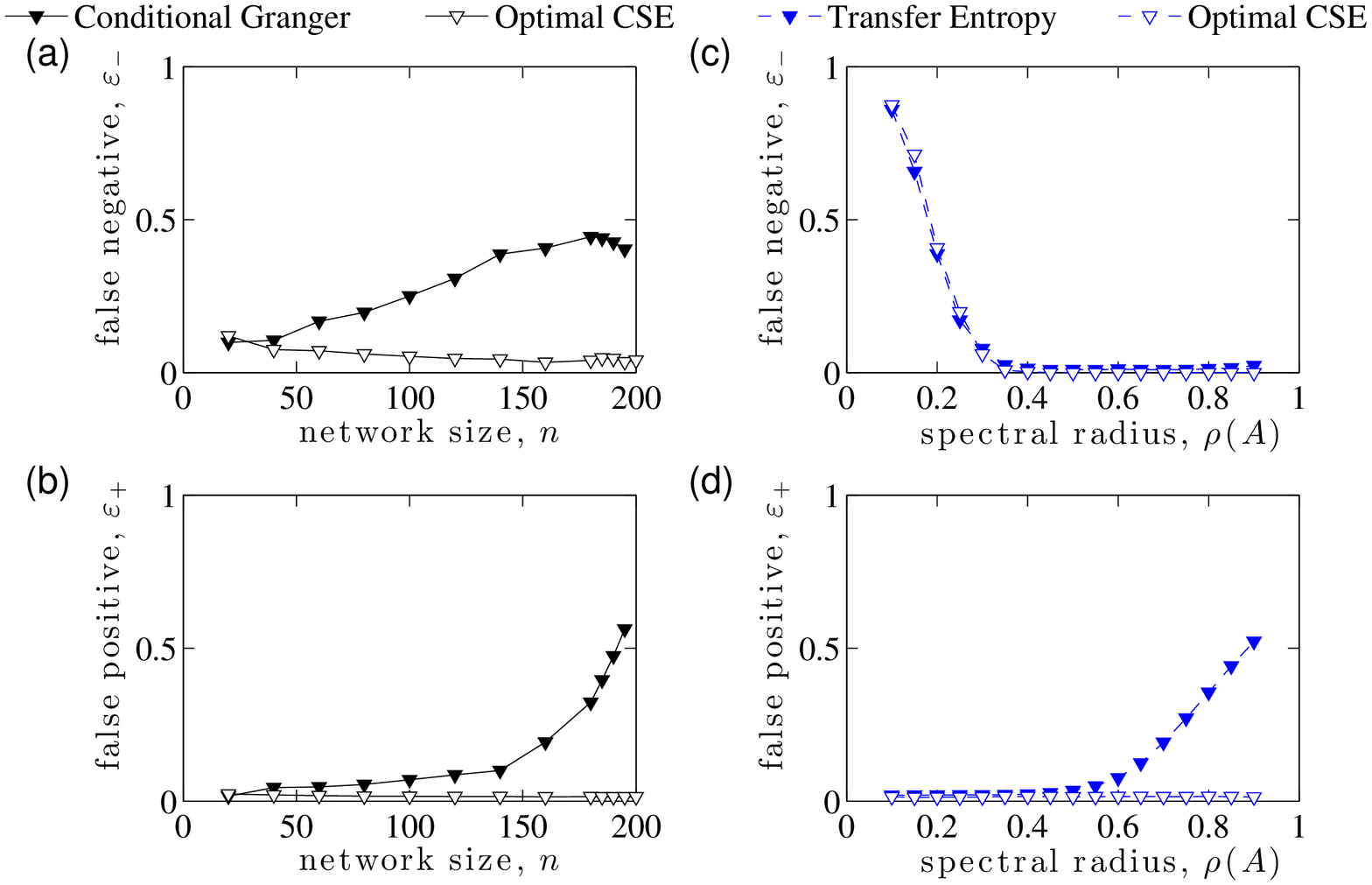}
\caption{
Comparison of causal network inference approaches: conditional Granger, transfer entropy, and oCSE.
The time series are generated from the Gaussian process defined in Eq.~\eqref{eq:main} using signed Erd\H{o}s-R\'enyi networks (see Sec 4.2 for details). Two types of inference error are examined: false negative and false positive ratios, defined in Eq.~\eqref{eq:inferror}.
(a-b) Inference error as a function of network size $n$ using conditional Granger versus oCSE approaches. Here the networks have fixed average degree $np=10$ and spectral radius $\rho(A)=0.8$. Sample size is $T=200$.
(c-d) Inference error as a function of the spectral radius $\rho(A)$ using transfer entropy versus oCSE approaches. Here the networks have fixed number of nodes $n=200$ and average degree $np=10$. Sample size is $T=2000$.
For all three approaches we apply the permutation test using $r=100$ permutations and significance level $\theta=99\%$.
Each data point is obtained from averaging over $20$ independent simulations of the network dynamics, Eq.~\eqref{eq:main}.
}\label{newfig:7}
\end{figure}

Figure~\ref{newfig:7}(c-d) shows that even for a sufficient number of samples, the transfer entropy approach without appropriate conditioning can lead to considerable inference error, and is therefore inherently unsound for causal network inference. In particular, although inference by both transfer entropy and oCSE give similar false negatives in the regime of $\rho(A)\approx 0$ where the dynamics is dominated by noise and not the causal dependences, transfer entropy yields increasing false positives when the causal links dominate, $\rho(A)\rightarrow 1$. This is mainly due to the fact that as $\rho(A)\rightarrow 1$, indirect causal nodes become increasingly difficult to distinguish from direct ones without appropriate conditioning~\cite{Sun2013PhysicaD}. oCSE, on the other hand, consistently yields nearly zero false positive ratios in the entire range of $\rho(A)$. 
Interestingly, the spectral radius $\rho(A)$ can be interpreted as the {\it information diffusion rate} on networks and found to be very close to criticality (i.e., $\rho(A)\approx 1$) in neuronal networks~\cite{Kanuchi2011,Larrremore2011}.

These numerical experiments highlight that whereas the conditional Granger approach is inaccurate for $T\lesssim n$ and the transfer entropy approach is inaccurate when $\rho(A)\lesssim1$, the proposed oCSE approach overcomes both limitations and yields almost exact network inference even for limited sample size.

\subsection{Performance of Optimal Causation Entropy Approach for Causal Network Inference}\label{sec:random}
Having established the advantages of the oCSE approach, we now examine its performance under various parameter settings.

First, we examine the effect of the significance level $\theta$ on the inference error. As shown in Fig.~\ref{newfig:8}(a-b), the false negative ratio $\epsilon_{-}$ does not seem to depend on $\theta$ and converges to zero as sample size $T$ increases. On the other hand, as $T\rightarrow\infty$, the false positive ratio saturates at the level $\epsilon_{+}\sim(1-\theta)$, which is consistent with the implementation of the permutation test which rejects the null hypothesis at $\theta$. This observation suggests that in order to achieve higher accuracy given sufficient sample size, one should choose $\theta$ as close to one as possible. The tradeoff in practice is that reliable implementation using larger $\theta$ requires an increasing number of  permutations and therefore increases the computational complexity of the inference algorithms.

\begin{figure}[htbp]
\centering
\includegraphics*[width=1\textwidth]{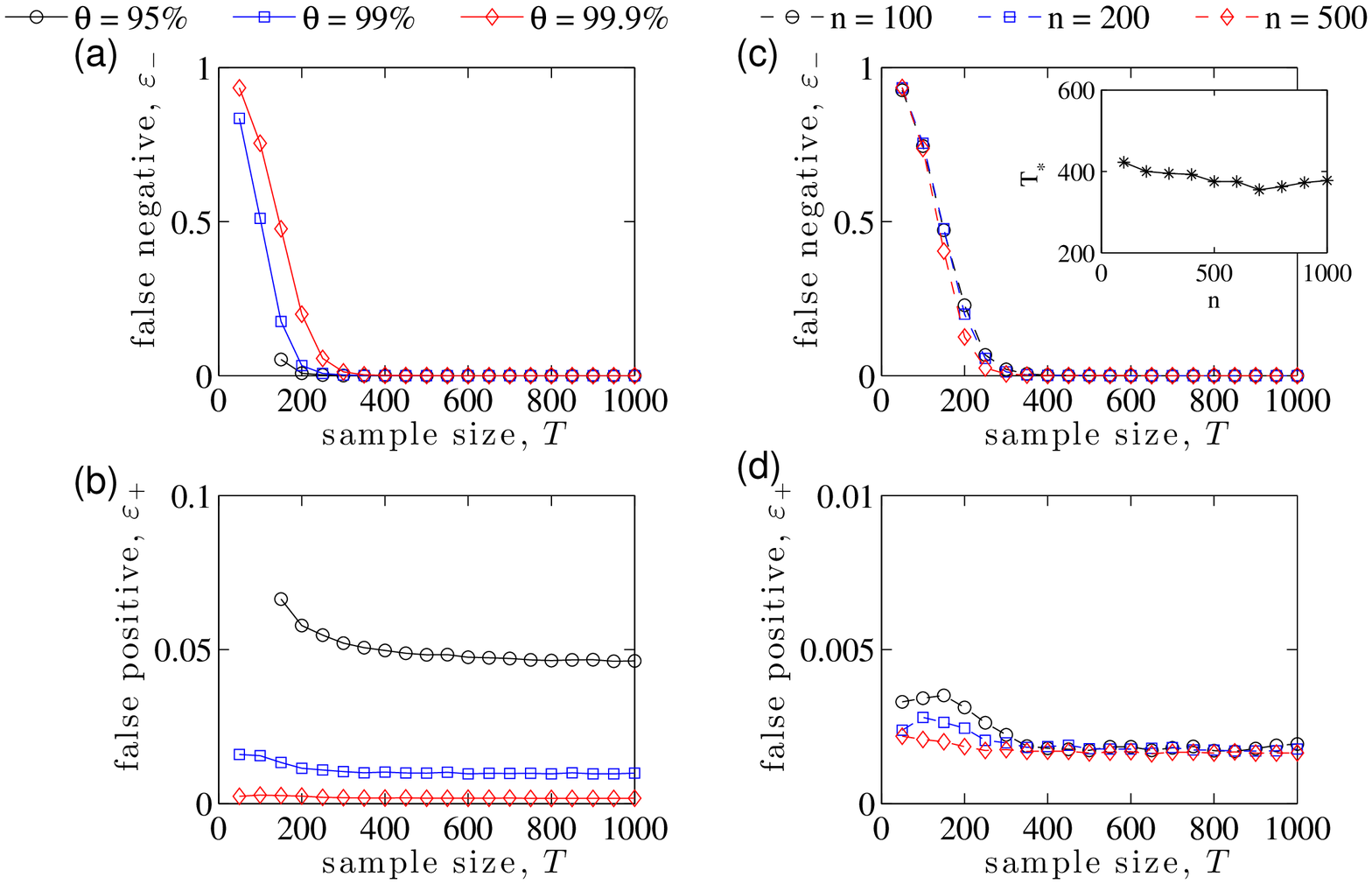}
\caption{Performance of the oCSE approach for causal network inference with different significance threshold for networks of various sizes.
The time series are generated from the Gaussian process defined in Eq.~\eqref{eq:main} using signed Erd\H{o}s-R\'enyi networks (see Sec 4.2 for details). False negative ratio (upper row) and false positive ratio (lower row) are defined in Eq.~\eqref{eq:inferror}.
(a-b) Inference error as a function of sample size $T$ for various significance levels $\theta$ used in the permutation test. Here networks have $n=200$ nodes with expected average degree $np=10$ and information diffusion rate $\rho(A)=0.8$.
(c-d) Inference error as a function of sample size $T$ for various network sizes. Here networks have the same expected average degree $np=10$ and information diffusion rate $\rho(A)=0.8$, and we use $r=1000$ permutations in the permutation test with $\theta=0.999$. Note that all three false negative curves in (c) appear to converge for $T\approx 300$. The critical sample size $T_*$ (defined as the minimum $T$ for which $\varepsilon_{-}<1-\theta$) as a function of the network size $n$ is shown in the inset of (c), suggesting the absence of scaling of $T_*$ in terms of $n$.
Each data point is obtained from averaging over $20$ independent simulations of the network dynamics, Eq.~\eqref{eq:main}.
}\label{newfig:8}
\end{figure}

\begin{figure}[htbp]
\centering
\includegraphics*[width=1\textwidth]{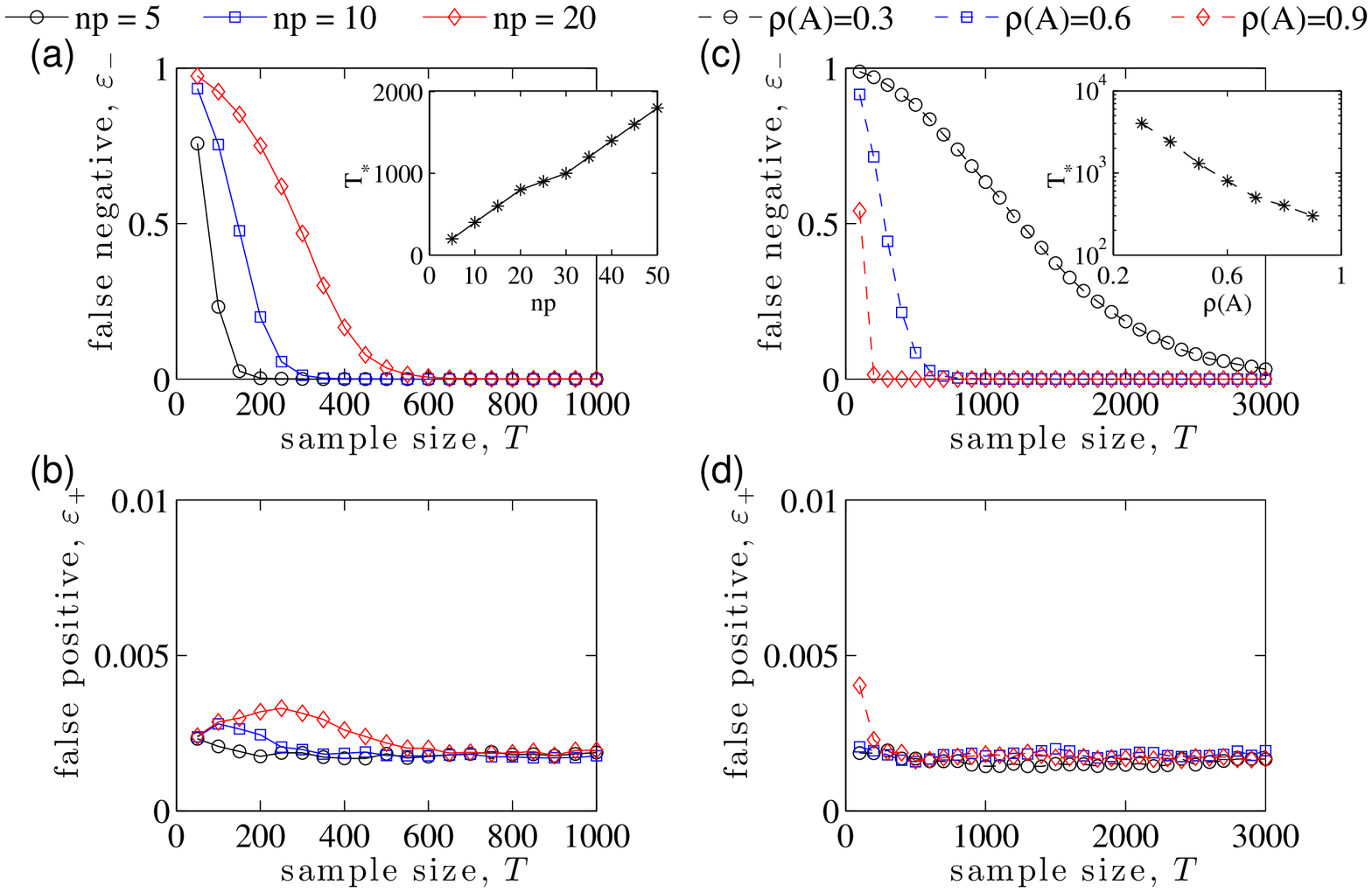}
\caption{
Performance of the oCSE approach for causal network inference for networks with different average degree and spectral radius.
The time series are generated from the Gaussian process defined in Eq.~\eqref{eq:main} using signed Erd\H{o}s-R\'enyi networks (see Sec 4.2 for details). False negative ratio (upper row) and false positive ratio (lower row) are defined in Eq.~\eqref{eq:inferror}.
(a-b) Inference error as a function of sample size for networks with various average degree $np$. Here the networks have the same size $n=200$ and spectral radius $\rho(A)=0.8$
The inset shows the critical sample size $T_*$ (see text) as a function of $np$.
(c-d) Inference error as a function of sample size for networks with various special radii $\rho(A)$. Here the networks have the same size $n=200$ and average degree $n=10$.
The permutation test used for the data in all panels involve $r=1000$ permutations with the significance threshold $\theta=0.999$.
Each data point is obtained from averaging over $20$ independent simulations of the network dynamics, Eq.~\eqref{eq:main}.
}\label{newfig:9}
\end{figure}

Next, we investigate the effect of sample size $T$ on the inference error for networks of different sizes. The results are shown in Fig.~\ref{newfig:8}(c-d). As expected, when $T$ increases, the false negative ratio decreases towards zero. Somewhat unexpectedly, the false positive ratio stays close to zero (in fact, close to the significance level $\theta$) even for relatively small sample size ($T$ as small as $50$ for networks of up to $500$ nodes). Furthermore, it appears that for networks of different sizes but the same average degree and information diffusion rate, the false negative ratios drop close to zero almost at the same sample size. To better quantify these effects, we define the {\it critical sample size} $T_*$ as the smallest number of samples for which the false negative ratio falls below $1-\theta$. As shown in the inset of Fig.~\ref{newfig:8}(c), for networks with the same average degree and information diffusion rate, the critical sample size $T_*$ remains mostly constant despite the increase of the network size. 
This result is unexpected. Traditionally, the network size $n$ represents a lower bound on sample size $T$ as any covariance matrix (e.g., application of the conditional Granger requires that $T>n$ for the invertibility of the covariance matrices). 
Our result surprisingly indicates that sample size $T$ does not need to scale with network size $n$ for accurate network inference, and highlights the fact that the oCSE approach is {\it scalable} and {\it data efficient}, with accuracy depending {\it not} on the size of the network, but rather on other network characteristics such as the density of links and spectral radius.

To strengthen our claim that for Erd\H{o}s-R\'enyi networks, performance of the causal inference by the oCSE approach depends on the density of links as measured by average degree and information diffusion rate as measured by the spectral radius rather than network size, we further investigate the dependence of inference error on these two additional parameters, $np$ and $\rho(A)$. As shown in Fig.~\ref{newfig:9}(a), for networks of the same size $n=200$ with fixed $\rho(A)=0.8$, the larger the average degree $np$, the larger the number of samples required  to reduce the false negative ratio to zero. In fact, as shown in the inset of Fig.~\ref{newfig:9}(a), the critical sample $T_*$ to reach $\varepsilon_{-}<1-\theta$ appears to scale {\it linearly} as a function of the average degree $np$, but not the network size (see the inset of Fig.~\ref{newfig:8}(c)). 
On the other hand, Fig.~\ref{newfig:9}(c-d) shows that the information diffusion rate, $\rho(A)$, seems to pose a harder constraint on accurate network inference: the smaller it is, the more samples that are needed for accuracy. In particular, as shown in the inset of Fig.~\ref{newfig:9}(c), the critical sample size appears to increase {\it exponentially} as $\rho(A)$ decreases towards zero.
Interestingly, as shown in Fig.~\ref{newfig:9}(b,d), the false positive ratios in both cases remain close to its saturation level around $1-\theta=10^{-3}$ even for very small sample size ($T\sim 50$), and this holds across networks with different average degree and different size (also see Fig.~\ref{newfig:8}(d)).

To briefly summarize these numerical experiments, we found that for the Gaussian process, practical causal network inference by the proposed oCSE overcomes fundamental limitations of previous approaches including conditional Granger and transfer entropy. One important advantage of the oCSE approach as suggested by the numerical results is that it often requires a relatively small number of samples to achieve high accuracy, making it a data-efficient method to use in practice. In fact, we found that for Erd\H{o}s-R\'enyi networks, the critical number of samples required for the false negatives to vanish does not depend on the network size, but rather depends on the density of links (as measured by average degree) and the information diffusion rate (as measured by the spectral radius of the network adjacency matrix). 
This is somewhat surprising because traditionally the network size poses as an absolute lower bound for the sample size in order for proper inversion of the covariance matrix (recent advances such as {\it Lasso} has partially resolved this issue by making specific assumptions of the model form and utilizing $l_1$ optimization techniques~\cite{Friedman2008,Tibshirani1996}).
On the other hand, our numerical results also suggest that only a very small number of samples is needed for the false positives to reach its saturation level. This level is inherently set by the significance threshold used in the permutation test rather than other network characteristics and can be systematically reduced by increasing the significance threshold and the number of permutations.

\section{Discussion and Conclusion}
Although time series analysis is broadly utilized for scientific research, the inference of large networks from relatively short times series data, and in particular causal networks describing ``cause-and-effect'' relationships, has largely remained unresolved.
The main contribution of this paper includes the theoretical development of causation entropy, an information-theoretic statistic designed for causality inference. 
Causation entropy can be regarded as a type of conditional mutual information which generalizes the traditional, unconditioned version of transfer entropy. When applied to Gaussian variables, causation entropy also generalizes Granger causality and conditional Granger causality.
We proved that for a general network stochastic process, the {causal parents} of a given node is exactly the minimal set of nodes that maximizes causation entropy, a key result which we refer to as the optimal causation entropy principle. Based on this principle, we introduced an algorithm for causal network inference called oCSE, which utilizes two algorithms to jointly infer the set of {causal parents} of each node. 

The effectiveness and data efficiency of the proposed oCSE approach were illustrated through numerical simulation of a Gaussian process on large-scale random networks. In particular, our numerical results show that the proposed oCSE approach consistently outperforms previous conditional Granger (with full conditioning) and transfer entropy approaches. Furthermore, inference accuracy using the oCSE approach generally requires fewer samples and fewer computations due to its aggregative nature: the conditioning set encountered in entropy estimation remains low-dimensional for sparse networks. The number of samples required for the desired accuracy does not appear to depend on network size, but rather, the density of links (or equivalently, the average degree of the nodes) and spectral radius (which measures the average rate at which information transfers through links). 
This makes oCSE a promising tool for the inference of networks, in particular large-scale sparse causal networks, as found in a wide range of real-world applications~\cite{BarratBook,Dorogovtsev2008,Newman2003,NewmanBook}.  Therefore we wish to emphasize that among all the details we presented herein, our oCSE-based algorithmic development (aggregative discovery jointly with progressive removal) is the most central contribution, serving as a method to systematically infer casual relationships from data generated by a complex interrelated process. In principle, we expect our two-step process given by Algorithms~\ref{alg:01} and~\ref{alg:02} to also be effective for network inference when the statistic is not necessarily causation entropy.

Several problems remain to be tackled. First, for general stochastic processes, exact expression of entropy is rarely obtainable.
Practical application of the oCSE therefore requires the development of non-parametric statistics for estimating causation entropy for general multi-dimensional random variables. An ideal estimation method should rely on as few assumptions about the form of the underlying variable as possible and be able to achieve the desired accuracy even for relatively small sample size. Several existing methods, including various binning techniques~\cite{Schindlera2007} and $k$-nearest neighbor estimates~\cite{Kraskov2004}, seem promising, but further exploration is necessary to examine their effectiveness~\cite{Hahs2011PRL}. Secondly, temporal stationarity assumptions are often violated in real-world applications. 
It is therefore of critical importance to divide the observed time series data into stationary segments~\cite{Wang2013}, allowing for the inference of causal networks that are {\it time-dependent}~\cite{Mantzaris2013}.
Finally, information causality suggests physical causality, but they are not necessarily equivalent~\cite{Hahs2011PRL,Pearl2009}. It is our goal to put this notion onto a more rigorous footing and further explore their relationships.

\section*{Acknowledgments}
We appreciate the insightful comments by C.~Cafaro, I.~Ipsen, J.~Skufca, G.~Song, and C. Tamon.
We thank Dr Samuel Stanton from the ARO Complex Dynamics and Systems Program for his ongoing and continuous support.

\appendix
\section{{Causal Inference of Finite-Order Markov Processes}}\label{app:A}
{
The main body of the paper deals with causal inference of a first-order stationary Markov process. Such framework can in fact be extended to any finite-order stationary Markov processes. The idea is to convert a finite-order process to a first-order one and define nodes in the causal network to be variables at different time layers.}

{Consider a stationary Markov process $\{Z_t\}$ of order $\tau$, which satisfies
\begin{equation}\label{eq:pZt}
	p(Z_{t}|Z_{t^-}) = p(Z_{t}|Z_{t-1},\dots,Z_{t-\tau})
\end{equation}
where $Z_{t^-}=[Z_{t-1},Z_{t-2},\dots]$ denotes the infinite past of $Z_t$.
Define a delay vector 
\begin{equation}
	X_t=[Z_{t},\dots,Z_{t-\tau+1}]. 
\end{equation}
Then, for every $x_t=[z_t,z_{t-1},\dots,z_{t-\tau+1}]$ and $x_{t^-}$, 
\begin{eqnarray}
	p(X_{t}=x|X_{t^-}=x_{t^-})&=&p(X_t=x_t|Z_{t-1}=z_{t-1},Z_{t-2}=z_{t-2},\dots)\nonumber\\
	&=&p(X_t=x_t|Z_{t-1}=z_{t-1},Z_{t-2}=z_{t-2},\dots,Z_{t-\tau}=z_{t-\tau})\nonumber\\
	&=&p(X_t=x_t|X_{t-1}=x_{t-1})
\end{eqnarray}
where the last step follows from Eq.~\eqref{eq:pZt} and the definition of $X_t$. 
See Fig.~\ref{fig:A1} for an example with $\tau=2$.
This shows that the process $\{X_t\}$ is indeed a first-order Markov process. The inference of the causal network is therefore converted into the identification of the causal parents of the nodes corresponding to $\{Z_t\}$ in the equivalent first-order process, for which the results in the main body of the paper apply so long as the conditions in Eq.~\eqref{eq:processconds} are met. }

\begin{figure}[htbp]
\centering
\includegraphics*[width=1\textwidth]{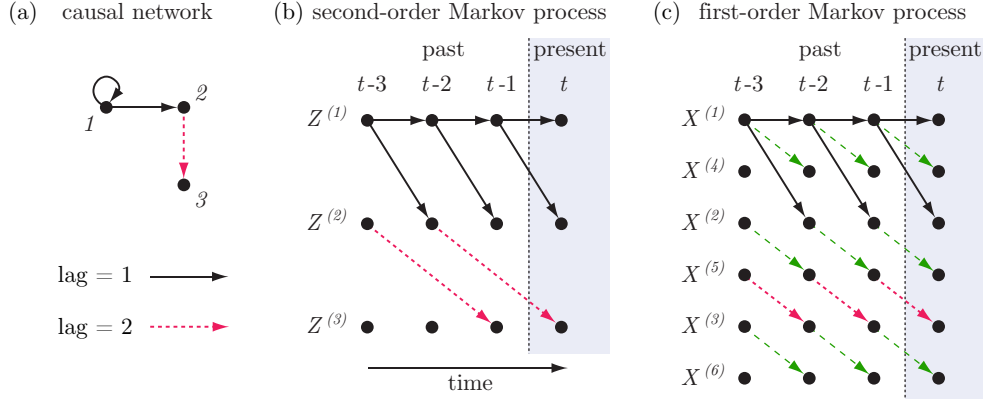}
\caption{
{
Converting a high-order Markov process into a first-order Markov process by making multiple instances of nodes. (a) A second-order Markov process on $n=3$ nodes, where causal relationships are across time lags of either 1 or 2 time time steps. We denote by $Z^{(i)}_t$ the state of node $i$ at time $t$.
(b) The flow of information for the second-order Markov process. Each row corresponds to a given node $i\in\{1,2,3\}$, and each column corresponds to the nodes' states $\{Z^{(i)}_t\}$ at a particular time $t$. Solid and dotted lines denote causal relationships across a time lag of 1 and 2 time steps, respectively.
(c) The flow of information for the equivalent first-order Markov process. Each row corresponds to a given node $i\in\{1,2,\dots,2n\}$, and each column corresponds to the nodes' states $\{X^{(i)}_t\}$ at a particular time $t$. For $i\in\{1,2,3\}$, the new variables $\{X^{(i)}_t\}$ are defined by $X^{(i)}_t = Z^{(i)}_t$ and $X^{(n+i)}_t = Z^{(i)}_{t-1}=X^{(i)}_{t-1}$. For Markov processes of order $\tau$, one can use the more general transformation $X^{((s-1)n+i)}_t = Z^{(i)}_{t-s+1}$ for nodes $i\in\{1,\dots,n\}$ and $s\in\{1,2,\dots,\tau\}$}.
}\label{fig:A1}
\end{figure}

{In practice, if the order of the underlying Markov process is {\it unknown}, then one needs to estimate it before being able to turn the process into a first-order one. The determination of Markov order has been a long-standing problem and is traditionally addressed by performing hypothesis tests based on computing a $\chi^2$ statistic~\cite{Bartlett1951}. The main disadvantage is that the $\chi^2$ distribution is only valid in the infinite-sample limit. A breakthrough was made recently by Pethel and Hahs~\cite{Pethel2014}, who developed a relatively efficient procedure for surrogate data generation which yields an exact test statistic valid for arbitrary sample size at the expense of increased computational burden.}

\section{{Necessity of the Faithfulness Assumption}}\label{app:B}
The faithfulness assumption is necessary for the ``true positive" statement in Theorem~\ref{thm:cse}(c) to be valid. As an example, consider a network of three nodes $X$, $Y$, and $Z$, and let 
\begin{equation}
	X_{t+1}=Y_t\mathbin{\oplus}Z_t, 
\end{equation}
where $\oplus$ denotes the ``exclusive or'' (xor) operation and $Y_t$ and $Z_t$ are Bernoulli random variables with probabilities
\begin{equation}
	P(Y_t=0)=P(Y_t=1)=P(Z_t=0)=P(Z_t=1)=0.5.
\end{equation}
It follows that
\begin{equation}
	C_{Y\rightarrow X}=C_{Z\rightarrow X}=0.
\end{equation}
However,
\begin{equation}
	C_{(Y,Z)\rightarrow X}=\log 2>0. 
\end{equation}
This results from the fact that multiple random variables can be mutually independent but not jointly independent. Expressed in terms of causal inference, it is possible that several variables jointly cause another variable, and this causal relationship cannot be decomposed. Such occurrences are believed to be rare and often explicitly excluded by making the faithfulness/stability assumption [48]. For example, in our above example it occurs only when all the discrete probabilities are exactly uniform, $p=0.5$, a situation that is unstable to perturbation. We exclude this situation from our study by imposing condition (3) in Eq.~\eqref{eq:processconds}.



\end{document}